\newcommand{\A}{\mathcal{A}}
\newcommand{\B}{\mathcal{B}}
\newcommand{\C}{\mathcal{C}}
\newcommand{\Z}{\mathbb{Z}}
\newcommand{\splt}{\operatorname{\sf Split}}
\newcommand{\bsc}{\operatorname{\sf Basic}}
\newcommand{\id}{\operatorname{Id}}
\providecommand{\Omicron}{\mathrm{O}}
\newtheorem{theorem}{Theorem}[section]
\newtheorem{lemma}[theorem]{Lemma}
\newtheorem{corollary}[theorem]{Corollary}
\newtheorem{proposition}[theorem]{Proposition}
\title{Subset synchronization of DFAs and PFAs, and some other results}
\author{Michiel de Bondt}
\begin{document}

\maketitle

\begin{abstract}
This paper contains results which arose from the research which led to 
arXiv:1801.10436, but which did not fit in arXiv:1801.10436. So 
arXiv:\allowbreak1801.10436 contains the highlight results, but there 
are more results which are interesting enough to be shared.
\end{abstract}

\section{Introduction}

For an NFA (nondeterministic finite automaton) $\A = (Q,\Sigma,\cdot)$,
$Q$ and $\Sigma$ are sets, and $\cdot: Q \times \Sigma \rightarrow 2^Q$.
A complete NFA is an NFA $\A = (Q,\Sigma,\cdot)$ for which 
$\cdot: Q \times \Sigma \rightarrow 2^Q \setminus \varnothing$. Here, 
the members of $Q$ are called states, and $\Sigma$ is called an alphabet 
of letters or symbols. $\cdot$ is called the transition function in some
sources, and sometimes the symbol $\delta$ is used for it.

A PFA (partial finite automaton) is in fact
an NFA for which $\cdot$ maps to sets of size at most $1$, but we will
use a different syntax, namely $\cdot: Q \times \Sigma \rightarrow 
Q \cup \{\bot\}$. A DFA (deterministic finite automaton) is a PFA which 
is in fact also a complete NFA, so we have 
$\cdot: Q \times \Sigma \rightarrow Q$ in that case.

$\cdot$ is left-associative, and we will omit it mostly. We additionally 
define $\cdot: 2^Q \times \Sigma \rightarrow 2^Q$ for NFAs 
$\A = (Q,\Sigma,\cdot)$, by $S a = \bigcup_{s \in S} sa$. For PFAs
(and DFAs), we do this as well, but in a totally different way, namely
by $S a = \bigcup_{s \in S} \{sa\}$ if $sa \ne \bot$ for all $s \in S$,
and $S a = \varnothing$ otherwise. Notice that both definitions agree 
with each other for DFAs, but not for proper PFAs.

We define $qw$ inductive as follows for states $q \in Q$, subsets 
$S \subseteq Q$, and words $w \in \Sigma^{*}$:
\begin{align*}
q \lambda &= q & q (x w) &= (qx)w & S \lambda &= S & S (x w) &= (Sx)w
\end{align*}
Here, $\lambda$ is the empty word, $x$ is the first letter of the word 
$xw$ and $w$ is the rest of $xw$.

We say that a complete NFA $\A = (Q,\Sigma,\cdot)$ is synchronizing 
(in $l$ steps), if there exists a $w \in \Sigma^{*}$ (of length $l$), 
such that $Qw$ has size $1$.
We say that a PFA $\B = (Q,\Sigma,\cdot)$ is carefully synchronizing
(in $l$ steps), if there exists a $w \in \Sigma^{*}$ (of length $l$), 
such that $Qw$ has size $1$. 
We say that a DFA is synchronizing (in $l$ steps) if it is carefully 
synchronizing as a PFA (in $l$ steps), or equivalently, if it is 
synchronizing as a complete NFA (in $l$ steps).

In \cite{arXiv:1801.10436}, there are many results about DFAs and PFAs 
in connection with synchronization, each of which is considered a 
relatively important result by at least one of the authors. But 
there are more results, and although they may be less important, they
are still important enough to be shared. For many of them, we need to 
extend the definition of (careful) synchronization to subsets of the 
state set. 

We say that a complete NFA $\A = (Q,\Sigma,\cdot)$ is synchronizing 
on $S \subseteq Q$ (in $l$ steps), if there exists a $w \in \Sigma^{*}$ 
(of length $l$), such that $Sw$ has size $1$.
We say that a PFA $\B = (Q,\Sigma,\cdot)$ is carefully synchronizing on 
$S \subseteq Q$ (in $l$ steps), if there exists a $w \in \Sigma^{*}$ 
(of length $l$), such that $Sw$ has size $1$. 
We say that a DFA is synchronizing on $S \subseteq Q$ (in $l$ steps) 
if it is carefully synchronizing on $S$ as a PFA (in $l$ steps),
or equivalently, if it is synchronizing on $S$ as a complete NFA 
(in $l$ steps).

In section \ref{subsetsync5states}, we give all possible maximum 
subset synchronization lengths for all PFAs and all DFAs with up to
$5$ states, and all subsets of these states. These results arise 
from a combination of computations and reasoning. The actual computations
can be found along with the code of \cite{arXiv:1801.10436}. The 
computations by itself are already sufficient to find all possible 
maximum subset synchronization lengths for all \emph{synchronizing}
DFAs with up to $6$ states and to prove Cardoso's conjecture for $6$ 
states. Cardoso's conjecture is Conjecture 7 of \cite{arXiv:1607.04025},
and was already proved for up to $5$ states in \cite{arXiv:1607.04025}.

In section \ref{subsetsync}, we give the maximum synchronization
lengths for state subsets of complete NFAs. More precisely, we show 
that we can force that all subsets which are not proper supersets of 
the start subset need to be traversed to synchronize a specific (start)
subset of size at least $2$. This extends a result of \cite{Burkhard} 
to state subsets. The alphabet size of our construction is less than 
$\frac12 n^2$. Furthermore, we give 
the maximum synchronization lengths for state subsets of size at most 
$3$ of PFAs. More precisely, we show that we can force that all subsets 
of size $3$ and $2$ need to be traversed to synchronize a specific
subset of size $3$. The Cerny automaton has the property that all 
subsets of size $2$ need to be traversed to synchronize a specific
subset of size $2$.

In section \ref{subsetsyncasymp}, we give asymptotic lower bounds 
for the maximum subset synchronization lengths of PFAs and 
DFAs (where the size of the subset is not arbitrary, 
but chosen to obtain the best lower bound). 
With this, we also take transitivity of the automaton into account. 
A PFA $\B = (Q,\Sigma,\cdot)$ (in particular a DFA) is transitive, 
if for every pair $(q,q') \in Q^2$, there exists a $w \in \Sigma^{*}$ 
such that $qw = q'$. We start with considering PFAs and DFAs with
an arbitrary number of symbols. The results about them are formulated 
as propositions instead of theorems, because they follow more or less
directly from techniques by others. Next, we consider binary PFAs
and binary DFAs. The result for binary PFAs is obtained by adapting the 
construction of \cite{arXiv:1801.10436}: the number of states is 
reduced and there is a finishing symbol instead of a start symbol.
The result for binary DFAs is obtained by combining the techniques of the 
result for binary PFAs with some of those in \cite{arXiv:1403.3972}. 
\cite{arXiv:1403.3972} contains lower bounds for the maximum subset 
synchronization lengths of (transitive) binary DFAs as well, but the 
lower bounds in this paper are better.

Section \ref{d3nfa2pfa} is about the property of D3-directing for NFAs. 
We discuss Lemma 3 of \cite{d3nfa}, which states that the maximum
length of a D3-directing word of an NFA with $n$ states is the same as 
that of a PFA with $n$ states. Furthermore, we discuss Theorem 1 of 
\cite{arXiv:1703.07995}, which states that the maximum length of a 
D3-directing word of a complete NFA with $n$ states is the same as that
of a DFA with $n$ states. The proof of Lemma 3 of \cite{d3nfa} is 
actually too short, and the proof of Theorem 1 of \cite{arXiv:1703.07995} 
is actually too long. We give a combined proof of which the length is 
in between. Furthermore, we compute the 
minimum alphabet size for D3-directing NFAs which take the maximum number
of steps, up to $7$ states. Again, the actual computations can be found 
along with the code of \cite{arXiv:1801.10436}. Section \ref{d3nfa2pfa} does not 
discuss the subset variant of D3-directing and (careful) synchronization, 
but the results in it can be generalized to subset variants in a 
straightforward manner.

In section \ref{martyuginprime}, we show that the prime number 
construction for careful synchronization of PFAs in \cite{primepfa} by 
Martyugin yields a synchronization length which is strictly between 
polynomial and exponential in the number of states. Proving this was
not the only reason to redo the complexity estimation in \cite{primepfa}. 
The other reason is that the estimation is not so accurate on the first 
point, and overly accurate on other points, to be not so accurate over all. 
Our estimate is more accurate on the first point, and not more accurate 
than needed on other points, to be more accurate over all.

\section{Subset synchronization up to 5 states}
\label{subsetsync5states}

In our algorithm for computing slowly synchronizing automata, the synchronization 
estimate for a subset $S$ is determined roughly as follows. 
First, the length of a path from $S$ to a smaller subset $S'$ is estimated. 
Here $S'$ does not need to be determined; even the size of $S'$ may be undetermined. 
Next, the length of a synchronizing path from $S'$ is estimated recursively.

Here, it is assumed that $S'$ does indeed synchronize. But with subset synchronization,
$S'$ does not need to synchronize. To overcome this problem, we just assume that subsets
of size less than that of $S$, which we denote by $|S|$, synchronize.

The DFA algorithm for computing slowly synchronizing automata is constructed in such a way, 
that it only gives solutions for which all pairs synchronize, even if we start with a 
strict subset of the states. So all given solutions are synchronizing automata, and $S'$ 
above will synchronize all the time.

The PFA algorithm for computing slowly synchronizing automata on subset $S$ will give all 
solutions in which subsets of size less than $|S|$ synchronize, but it may possibly give 
some other solutions as well.

\subsection{DFAs up to 5 states}

We found the following subset synchronizations lenghts for fully synchronizing DFAs:
\begin{center}
\renewcommand{\arraystretch}{1.4}
\begin{tabular}{|c|cccc|}
\hline
$|S|$ & $n = 3$ & $n = 4$ & $n = 5$ & $n = 6$ \\
\hline
2 & 3 & 6 & 10 & 15 \\[-5pt]
3 & 4 & 8 & 13 & 20 \\[-5pt]
4 & & 9 & 15 & 22 \\[-5pt]
5 & & & 16 & 24 \\[-5pt]
6 & & & & 25 \\
\hline
\end{tabular} 
\end{center}
If you see a pattern in these values, then take special attention to the value
$20$ for $n = 6$ and $|S| = 3$. Actually, there is a pattern, namely the known 
formula
$$
(n-1)^2 \Big(\Big\lceil \frac{n}{|S|} \Big\rceil - 1 \Big) 
\Big( 2n - |S| \Big\lceil \frac{n}{|S|} \Big\rceil - 1 \Big) 
$$
for the synchronization lengths of the Cerny automata. A conjecture of 
\^{A}ngela Cardoso asserts that the subset synchronization lengths of the Cerny
automata are the best possible for synchronizing DFAs, see \cite{arXiv:1607.04025}.
So we have proved this conjecture up to $6$ states. The conjecture is trivially
true for $|S| = 2$, because in order to synchronize the hardest pair in a 
Cerny automaton, all pairs must be traversed.

\begin{theorem}
Suppose that $n \le 5$ and that subset $S$ of the $n$ states synchronizes. 
If our DFA is not synchronizing, then the length of the minimum synchronizing
word for $S$ is less than the value for $n$ and $|S|$ in the above table.

In particular, the above table yields the subset synchronization lengths for 
DFAs with $n \le 5$ states.
\end{theorem}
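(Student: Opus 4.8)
The plan is to combine the exhaustive computational data mentioned in the introduction with a case analysis that upgrades "maximum over the computed automata" to "maximum over all automata." The computation already gives, for each $n \le 5$ and each subset size $|S|$, the largest synchronization length of $S$ over all DFAs for which every subset of size less than $|S|$ synchronizes (indeed, over all DFAs, since every DFA on at most $5$ states either synchronizes all such subsets or can be handled separately). So the first step is to record that the table entries are realized by the Cerny automata and that no computed automaton exceeds them; this is exactly what the code of \cite{arXiv:1801.10436} verifies, and the table matches the closed formula $(n-1)^2(\lceil n/|S|\rceil - 1)(2n - |S|\lceil n/|S|\rceil - 1)$.

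The remaining, non-computational work is to show that if the DFA is \emph{not} synchronizing but $S$ nonetheless synchronizes, then the minimum synchronizing word for $S$ is strictly shorter than the table value. First I would reduce to the case that $S$ itself cannot be merged with anything outside it: let $R$ be the set of states reachable from $S$ under some word together with $S$; since $S$ synchronizes, $R$ synchronizes to the same singleton, and $R$ is closed under the action, so we may pass to the sub-DFA induced on $R$ and assume $S$ generates the whole state set in the sense that every state lies on a path from $S$. Then, because the full DFA on $R$ is not synchronizing (if it were, we would be in the synchronizing case), there is a nontrivial "obstruction": by the classical Cerny–Starke criterion, two states $p, q$ that are never merged. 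The key point is that such an obstruction forces the reachable subsets from $S$ to avoid a substantial portion of the subset lattice — in particular $S$ cannot pass through every subset of its own size, nor through certain large subsets — which is precisely where the Cerny bound's extremal length comes from.

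The main obstacle will be making this last step rigorous for every $(n, |S|)$ with $n \le 5$ without simply re-running the computation: one must argue that the Cerny automaton is the \emph{unique} extremal example (up to the relevant equivalences) for each cell of the table, and that it is synchronizing, so any non-synchronizing competitor necessarily loses at least one step. I expect to handle this by a short lattice argument for small $|S|$ (for $|S| = 2$ the statement is immediate, since synchronizing the hardest pair in Cerny forces traversal of all $\binom{n}{2}$ pairs, and a non-synchronizing DFA has fewer reachable pairs from any pair, hence a strictly shorter worst case), and for $|S| \in \{3,4,5\}$ by inspecting the finite list of extremal and near-extremal automata produced by the algorithm and checking directly that the only ones attaining the table value are synchronizing. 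Once uniqueness of the synchronizing extremal example is established in each cell, the "in particular" clause follows: the table value is attained (by Cerny, which is synchronizing, hence covered by the synchronizing case) and is an upper bound (by the computation for the synchronizing case and by the strict inequality just proved for the non-synchronizing case), so it equals the true maximum subset synchronization length over all DFAs with $n \le 5$ states.
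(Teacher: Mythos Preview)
Your proposal has a real gap in the handling of the non-synchronizing case. As the paper explains just before the theorem, the DFA search algorithm is built so that it \emph{only outputs automata in which every pair synchronizes}; in other words, the computed list consists entirely of synchronizing DFAs. So your plan to ``inspect the finite list of extremal and near-extremal automata produced by the algorithm and check directly that the only ones attaining the table value are synchronizing'' is circular: non-synchronizing DFAs were never enumerated, and nothing in the computation bounds their subset synchronization length. Likewise, proving that the \v{C}ern\'y automaton is the unique synchronizing extremal example (which you flag as the main obstacle) is both harder than what is needed and insufficient --- uniqueness among synchronizing DFAs says nothing about possible non-synchronizing competitors.

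What the paper actually does for the non-synchronizing case is a short, purely combinatorial counting argument, not a computer search. If the DFA is not synchronizing, some pair $\{p,q\}$ fails to synchronize; then every subset containing $\{p,q\}$ also fails, which already removes enough vertices from the power-set path to drop below the table value in the cases $(n,|S|)=(4,3)$ and most subcases of $(5,3)$ and $(5,4)$. A couple of borderline subcases (e.g.\ $(5,3)$ when exactly one pair fails, where the crude count gives exactly $13$) are disposed of by a two-line local argument about the first reducing letter. Your remark that an unsynchronizable pair ``forces the reachable subsets from $S$ to avoid a substantial portion of the subset lattice'' is exactly the right idea --- you should pursue that directly for the three pairs $(n,|S|)\in\{(4,3),(5,4),(5,3)\}$ rather than deferring to the computed list. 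The reduction to the reachable sub-DFA $R$ you propose is not needed and does not help at $|R|=n$.
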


\begin{proof}
If $|S| = 2$, then the values in the table indicate that the corresponding
synchronization paths visit all subsets of size $2$. In particular, 
our DFA is synchronizing. 

If $|S| = n$, then our DFA is synchronizing as well. So three cases remain.
\begin{itemize}

\item \emph{$n = 4$ and $|S| = 3$.}

Assume that our DFA is not synchronizing. Then there is a subset of size $2$
which does not synchronize, say that $\{3,4\}$ does not synchronize.
Then there are at most $\binom{4}{2} - 1 = 5$ subsets of size $2$ which do 
synchronize. 

Since $\{1,3,4\}$ and $\{2,3,4\}$ do not synchronize
either, there are at most $\binom{4}{3} - 2 = 2$ subsets of size $3$
which do synchronize. This leaves $2 + 5 = 7$ subsets for the 
synchronization path, which is less than the value $8$ for $n$ and $|S|$
in the above table.

\item \emph{$n = 5$ and $|S| = 4$.}

Assume that our DFA is not synchronizing. Then there is a subset of size $2$
which does not synchronize, say that $\{4,5\}$ does not synchronize.
We distinguish three subcases.

\begin{itemize}

\item $\{4,5\}$ is the only subset of size $2$ which does not synchronize.

Then every symbol of our DFA acts as a permutation on $\{4,5\}$.
So the number of states of the set $\{4,5\}$ in our synchronization path
will be increasing. We can count the number of subsets $S'$ for given
values of $|S'|$ and $|S' \cap \{4,5\}|$
\begin{center}
\renewcommand{\arraystretch}{1.4}
\begin{tabular}{|c|ccc|}
\hline
$|S' \cap \{4,5\}|$ & $|S'] = 4$ & $|S'| = 3$ & $|S'| = 2$ \\
\hline
0 & 0 & 1 & 3 \\[-5pt]
1 & 2 & 6 & 6 \\
\hline
\end{tabular} 
\end{center}
From this, we deduce that the synchronization path has length at most
$2 + 6 + 6 = 14$, which is less than the value $15$ for $n = 5$ and $|S| = 4$
in the table.

\item $\{3,4\}$ and $\{4,5\}$ do not synchronize.

Then one can count that at most one subset of size $4$ synchronizes,
at most $5$ subsets of size $3$ synchronize, and at most $8$ subsets
of size $2$ synchronize. From this, we deduce that the synchronization 
path has length at most $1 + 5 + 8 = 14$, which is less than the value 
$15$ for $n = 5$ and $|S| = 4$ in the table.

\item $\{2,3\}$ and $\{4,5\}$ do not synchronize.

Then there is no subset of size $4$ which synchronizes. This 
contradicts that $|S|$ synchronizes.

\end{itemize}

\item \emph{$N = 5$ and $|S| = 3$.}

Just as in the previous case, we assume that $\{4,5\}$ does not 
synchronize, and we distinguish the same three subcases.
\begin{itemize}

\item $\{4,5\}$ is the only subset of size $2$ which does not synchronize.

In a similar manner as in the corresponding subcase of the case above,
we deduce that the synchronization path has length at most
$1 + 6 + 6 = 13$, which is exactly the value $13$ for $n = 5$ and $|S| = 3$
in the table.

Supppose that length $13$ is indeed possible. We derive a contradiction.
We can deduce that $S = \{1,2,3\}$. Let $a$ be the first symbol of 
the shorthest synchronizing word for $S$, and let $S' = Sa$. 
Then $|S' \cap \{4,5\}| = 1$, say that $S' \cap \{4,5\} = \{4\}$. 

The preimage $\{4\}a^{-1}$ of $\{4\}$ under $a$ contains exactly one element of 
$\{1,2,3\}$ and exactly one element of $\{4,5\}$. There are $2$ subsets
of size $3$, which contain $\{4\}a^{-1}$, but not $\{4,5\}$. Both subsets
are mapped by $a$ to a subset of size $2$ which synchronizes. This contradicts
that there are $7$ subsets of size $3$ in our synchronization path.

\item $\{3,4\}$ and $\{4,5\}$ do not synchronize.

In a similar manner as in the corresponding subcase of the case above,
we deduce that the synchronization path has length at most
$5 + 8 = 13$, which is exactly the value $13$ for $n = 5$ and $|S| = 3$
in the table.

Supppose that length $13$ is indeed possible. We derive a contradiction.
Notice that $\{1,2\}$ is in the synchronization path, say $Sw = \{1,2\}$.
Since either $S \cap \{3,4\} \ne \varnothing$ or $S \cap \{4,5\} \ne \varnothing$, 
we deduce that either $\{3,4\}w \cap \{1,2\} \ne \varnothing$ or 
$\{4,5\}w \cap \{1,2\} \ne \varnothing$. So there is another subset of size 
$2$ which does not synchronize. Contradiction.

\item $\{2,3\}$ and $\{4,5\}$ do not synchronize.

Then one can count that at most $4$ subsets of size $3$ synchronize,
and at most $8$ subsets of size $2$ synchronize. From this, we deduce that 
the synchronization path has length at most $4 + 8 = 12$, which is less than 
the value $13$ for $n = 5$ and $|S| = 3$ in the table. \qedhere

\end{itemize}
\end{itemize}
\end{proof}

\subsection*{PFAs up to 5 states}

We found the following subset synchronizations lengths for PFAs, for which
subsets of size less than $|S|$ synchronize, and possibly other PFAs.
\begin{center}
\renewcommand{\arraystretch}{1.4}
\begin{tabular}{|c|ccc|}
\hline
$|S|$ & $n = 3$ & $n = 4$ & $n = 5$ \\
\hline
2 & 3 & 6 & 10 \\[-5pt]
3 & 4 & 10 & 20 \\[-5pt]
4 & & 10 & 22 \\[-5pt]
5 & & & 21 \\
\hline
\end{tabular} 
\end{center}
Notice that the value $22$ for $n = 5$ and $|S| = 4$ is bigger than the value
$21$ for $n = 5 = |S|$. Hence the automaton for the case $n = 5$ and $|S| = 4$
is not synchronizing.

Just as for the DFAs, we can take the Cerny automata if $|S| = 2$. If $|S| = n$,
then we take the PFAs in \cite{arXiv:1801.10436}. Otherwise, we take the following
ternary PFAs:
\begin{center}
\begin{tikzpicture}
\begin{scope}[shift={(0,0)}]
\node[circle,draw,fill=lightgray,inner sep=0pt,minimum width=3mm] (0) at (90:1) {};
\node[circle,draw,fill=lightgray,inner sep=0pt,minimum width=3mm] (1) at (180:1) {};
\node[circle,draw,fill=lightgray,inner sep=0pt,minimum width=3mm] (2) at (270:1) {};
\node[circle,draw,inner sep=0pt,minimum width=3mm] (3) at (360:1) {};
\foreach \n [remember=\n as \nlast (initially 3)] \n in {0,1,2,3} {
\draw[->,bend right=39] (\nlast) edge node[shift={(90*\n+45:2mm)}] {$a$} (\n);
}
\draw[->] (0) edge[out=40,in=140,looseness=12] node[below] {$b,c$} (0);
\draw[->] (1) edge[out=135,in=225,looseness=10] node[right] {$b$} (1);
\draw[->] (3) edge node[pos=0.50,above,inner sep=2pt] {$b$} (1);
\draw[->] (3) edge node[pos=0.70,above,inner sep=4pt] {$c$} (2);
\end{scope}
\begin{scope}[shift={(4,0)}]
\node[circle,draw,fill=lightgray,inner sep=0pt,minimum width=3mm] (0) at (108:1.2) {};
\node[circle,draw,fill=lightgray,inner sep=0pt,minimum width=3mm] (1) at (180:1.2) {};
\node[circle,draw,fill=lightgray,inner sep=0pt,minimum width=3mm] (2) at (252:1.2) {};
\node[circle,draw,inner sep=0pt,minimum width=3mm] (3) at (324:1.2) {};
\node[circle,draw,inner sep=0pt,minimum width=3mm] (4) at (393:1.2) {};
\foreach \n [remember=\n as \nlast (initially 4)] \n in {0,1,2,3,4} {
\draw[->,bend right=30] (\nlast) edge node[shift={(72*\n+72:2mm)}] {$a$} (\n);
}
\draw[->] (0) edge[out=63,in=153,looseness=10] node[pos=0.45,below,inner sep=2.5pt] {$b$} (0);
\draw[->] (1) edge[out=135,in=225,looseness=10] node[right] {$c$} (1);
\draw[->] (2) edge[out=202,in=302,looseness=12,pos=0.55,inner sep=2.5pt] node[above] {$b,c$} (2);
\draw[->] (1) edge node[pos=0.40,right,inner sep=4pt] {$b$} (0);
\draw[->] (4) edge node[pos=0.50,left,inner sep=2pt] {$b,c$} (3);
\end{scope}
\begin{scope}[shift={(8,0)}]
\node[circle,draw,fill=lightgray,inner sep=0pt,minimum width=3mm] (0) at (72:1.2) {};
\node[circle,draw,fill=lightgray,inner sep=0pt,minimum width=3mm] (1) at (144:1.2) {};
\node[circle,draw,fill=lightgray,inner sep=0pt,minimum width=3mm] (2) at (216:1.2) {};
\node[circle,draw,fill=lightgray,inner sep=0pt,minimum width=3mm] (3) at (288:1.2) {};
\node[circle,draw,inner sep=0pt,minimum width=3mm] (4) at (360:1.2) {};
\foreach \n [remember=\n as \nlast (initially 4)] \n in {0,1,2,3,4} {
\draw[->,bend right=30] (\nlast) edge node[shift={(72*\n+36:2mm)}] {$a$} (\n);
}
\draw[->] (0) edge[out=27,in=117,looseness=10] node[pos=0.55,below,inner sep=2.5pt] {$b$} (0);
\draw[->] (1) edge node[pos=0.55,below,inner sep=3pt] {$c$} (0);
\draw[->] (2) edge node[pos=0.4,right,inner sep=3pt] {$c$} (0);
\draw[->] (1) edge node[pos=0.5,right,inner sep=2pt] {$b$} (2);
\draw[->] (4) edge node[pos=0.5,above,inner sep=2.5pt] {$c$} (2);
\draw[->] (4) edge node[pos=0.4,left,inner sep=4pt] {$b$} (3);
\end{scope}
\end{tikzpicture}
\end{center}

\begin{theorem}
Suppose that $n \le 5$ and that subset $S$ of the $n$ states synchronizes. 
If our PFA does not synchronize all subsets of size less than $|S|$, then the
length of the minimum synchronizing word for $S$ is less than the values 
for $n$ and $|S|$ in the above table.

In particular, the above table yields the subset synchronization lengths for 
PFAs with $n \le 5$ states.
\end{theorem}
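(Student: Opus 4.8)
The plan is to argue exactly as in the proof of the theorem for DFAs above, using in addition one elementary feature of careful synchronization: if $T$ carefully synchronizes and $T' \subseteq T$, then $T'$ carefully synchronizes, since a word synchronizing $T$ is defined on all of $T$ and hence on all of $T'$; equivalently, every superset of a non-synchronizing subset is non-synchronizing. So fix a shortest synchronizing word $w = a_1 \cdots a_k$ for $S$ and let $S = S_0, S_1, \dots, S_k$ be the induced path, $S_i = S_{i-1} a_i$, $|S_k| = 1$. Minimality makes the $S_i$ pairwise distinct, and for $i < k$ we have $2 \le |S_i| \le |S|$ and each $S_i$ carefully synchronizes (via the corresponding suffix of $w$). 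Hence $k \le N$, where $N$ is the number of carefully synchronizing subsets of size between $2$ and $|S|$, and the proof reduces to bookkeeping how many such subsets a non-synchronizing subset — together with all its supersets of size $\le |S|$ — removes from the count $N$. When $|S| = 2$ the only smaller subsets are singletons, which always synchronize, and when $|S| = n$ we have $S = Q$, so every proper subset synchronizes; in both cases the hypothesis of the first statement is vacuous and nothing is to be proved there.

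For the remaining pairs $(n,|S|) \in \{(4,3),(5,3),(5,4)\}$ I would normalize $S = \{1,\dots,|S|\}$, so that every pair inside $S$ is forced to synchronize and any non-synchronizing subset of size $< |S|$ must meet $\{|S|+1,\dots,n\}$. For $(4,3)$ and $(5,3)$ the tabulated value is already the total number of subsets of size $2$ through $|S|$ (that is, $\binom{4}{2}+\binom{4}{3} = 10$ and $\binom{5}{2}+\binom{5}{3} = 20$); since a non-synchronizing subset drops at least itself from $N$, we immediately get $k \le N < $ the table value, and counting the (at most two, respectively at most three) triple-supersets of the offending pair makes the gap explicit. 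These two cases are pure bookkeeping.

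The pair $(5,4)$ is the crux, because there the tabulated value $22$ is strictly below the total $\binom{5}{2}+\binom{5}{3}+\binom{5}{4} = 25$. Normalizing $S = \{1,2,3,4\}$, a non-synchronizing subset of size $2$ or $3$ must contain state $5$. If some pair $\{i,5\}$ fails, it removes three triples and three quadruples along with it, leaving $N \le 9+7+2 = 18 < 22$; if at least two triples through $5$ fail, a similar count gives $N \le 20$; and if a quadruple other than $S$ fails to synchronize, removing it leaves $N \le 21$. The only configuration the crude count does not settle is: exactly one triple $\{i,j,5\}$ fails, no pair fails, and the two remaining quadruples synchronize — then $N = 22$. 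This is the main obstacle, and I expect to treat it as the DFA proof treats its analogous ``suppose the extremal length is attained, derive a contradiction'' subcases. The key observation, playing the role of ``every letter permutes $\{4,5\}$'' in the DFA argument, is that because $\{i,j,5\}$ is the \emph{unique} non-synchronizing triple, every letter defined on all of $\{i,j,5\}$ must map it bijectively onto itself — otherwise its image would be a smaller set or a different triple, both of which synchronize, forcing $\{i,j,5\}$ to synchronize. Tracking $|S_m \cap \{i,j,5\}|$ along a hypothetical length-$22$ path, together with a look at the first letter $a_1$ and at the point-preimages under $a_1$ of the states occurring in $Sa_1$, should show that two distinct subsets are forced to map directly onto already-synchronizing ones and hence cannot both lie in the interior of the path, contradicting that all $22$ subsets occur. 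Everything outside this subcase is routine.

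Finally, the ``in particular'' clause follows by combining the first statement with the quoted computation and the displayed automata: the first statement shows that the maximum, over all $n$-state PFAs with $n \le 5$ and all synchronizing subsets $S$, of the minimal synchronizing-word length for $S$ is attained at a PFA that synchronizes all subsets of size less than $|S|$; the search accompanying \cite{arXiv:1801.10436} shows this maximum equals the tabulated value; and the Cerny automata (for $|S| = 2$), the PFAs of \cite{arXiv:1801.10436} (for $|S| = n$), and the three displayed ternary PFAs (for the remaining pairs) realize that value.
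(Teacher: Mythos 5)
You reduce correctly, and in essentially the same way as the paper: monotonicity of careful synchronization under taking subsets, vacuity for $|S|=2$ and $|S|=n$, the counting for $(n,|S|)=(4,3)$ and $(5,3)$ (table value $=$ total number of subsets of size $2$ through $|S|$), and the isolation of the critical configuration for $(5,4)$ --- after normalizing $S=\{1,2,3,4\}$, exactly one non-synchronizing triple $\{i,j,5\}$, no non-synchronizing pair, both quadruple supersets of the bad triple failing and the other three quadruples synchronizing, so that exactly $22$ subsets of size $2$, $3$, $4$ synchronize. But at the one non-routine point you stop proving and start hoping: the contradiction for a hypothetical length-$22$ path is only sketched (``I expect'', ``should show''), and the sketch has concrete problems. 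Your observation that any letter defined on \emph{all} of $\{i,j,5\}$ must permute it is correct, but letters occurring along the path are only guaranteed to be defined on the current set $S_m$, so they may merge states of $S_m\cap\{i,j,5\}$ or move them outside $\{i,j,5\}$; hence $|S_m\cap\{i,j,5\}|$ has no monotone behaviour and the DFA-style tracking does not transfer. Moreover, the first letter $a_1$ cannot supply the two ``collapsing'' subsets you want: in a length-$22$ path the sizes are non-increasing and all three synchronizing quadruples must be visited, so $a_1$ maps $S$ bijectively onto another quadruple and there is no merging for a preimage argument to exploit. This is a genuine gap, located exactly at the case that carries the theorem.

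The paper closes this case from the other end of the path. Let $a$ be the letter applied to the \emph{last} quadruple $\{q_1,q_2,q_3,q_4\}$ on the path; by distinctness of the path elements its image has size at most $3$, say $q_1a=q_2a$. Then $\{q_1,q_2,q_3\}a$ and $\{q_1,q_2,q_4\}a$ have size at most $2$; since $a$ is defined on both triples and every subset of size at most $2$ synchronizes, both triples synchronize, so neither equals the bad triple, so both must occur among the nine triples on the path. At least one of them is not the last triple of the path; from it the assumed path still needs at least $1+10+1=12$ steps (another triple, all ten pairs, the final singleton), whereas applying $a$ and then carefully synchronizing the resulting set of size at most $2$ costs at most $1+10$ steps. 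This shortcut contradicts minimality of the length-$22$ word, which is precisely the contradiction your write-up leaves open. You need this argument (or a genuinely completed version of your own sketch) for the $(5,4)$ case; without it your proposal establishes the theorem only in the cases that were already routine.
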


\begin{proof}
If $|S| \le 3$, then the values in the table indicate that the corresponding
synchronization paths visit all subsets of size at least $2$ and at most $|S|$. 
In particular, all subsets of size less than $|S|$ synchronize.

If $|S| = n$, then our PFA synchronizes, and so do all subsets of size less 
than $|S|$.

So the only case which remains is $n = 5$ and $|S| = 4$. 

So assume we have a PFA with $n = 5$ states, which synchronizes a subset 
$S$ of size $4$. Now suppose that there is a subset $S'$ of size less than $4$, 
which does not synchronize. Then we can choose such an $S'$ of size $3$. Say
that $\{3,4,5\}$ does not synchronize. Then $\{1,3,4,5\}$ and $\{2,3,4,5\}$
do not synchronize either. So there are at most
$$
\big(\tbinom{5}{4} - 2\big) + \big(\tbinom{5}{3} - 1\big) + \tbinom{5}{2}
= 3 + 9 + 10 = 22
$$
subsets of size $2$, $3$ or $4$ which do synchronize. 

It follows that a synchronization path of minimum length at least $22$ for $S$
contains all subsets of size $2$ and $3$, except $S'$. This is
however not possible. Indeed, let $a$ be the symbol in the synchronization 
path which is applied on the last subset of size $4$, say $\{q_1,q_2,q_3,q_4\}$.
Then $\{q_1,q_2,q_3,q_4\}a$ has size less than $4$. Say that $q_1 a = q_2 a$.

Then $\{q_1,q_2,q_3\}a$ and $\{q_1,q_2,q_4\}a$ have size less than $3$.
Since our synchronization path contains all subsets of size $2$, we deduce that
$\{q_1,q_2,q_3\} \neq S' \neq \{q_1,q_2,q_4\}$. As our synchronization path does 
not require both $\{q_1,q_2,q_3\}$ and $\{q_1,q_2,q_4\}$, there is another 
subset of size $3$ besides $S'$ which can be excluded from the synchronization
path. Contradiction, so our PFA synchronizes all subsets of size less than $|S|$.
\end{proof}

\section{Subset synchronization with any number of states}
\label{subsetsync}

\subsection{Complete NFAs}

We construct a complete NFA with $\Omicron(n^2)$ symbols in which every subset 
of size at least $2$ must be traversed before a singleton can be reached. 
But first, we construct a complete NFA with only $\Omicron(n)$ symbols in which 
half of the subsets is traversed in a synchronization path.

\begin{theorem}
Let $n \ge 2$. There exists a complete NFA with state set $Q = \{1,2,\ldots,n\}$ 
and $2n - 2$ symbols, which synchronizes $Q$ in $2^{n-1} - 1$ steps and 
$\{n-1,n\}$ in $2^{n-2}$ steps.

More precisely, the shortest path from $Q$ to $\{1\}$ traverses all subsets of
$Q$ except $\varnothing$ in reverse lexicographic order. Furthermore, the NFA is
transitive on the set of nonempty state subsets.
\end{theorem}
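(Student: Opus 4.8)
The plan is to treat subsets as binary counters and realise the reverse lexicographic enumeration as a binary ``count down''. Identify each nonempty $S\subseteq Q$ with $v(S)=\sum_{i\in S}2^{i-1}\in\{1,\ldots,2^n-1\}$; then the reverse lexicographic order on the nonempty subsets is the order of decreasing $v$, with $Q$ (value $2^n-1$) first and $\{1\}$ (value $1$) last. I would design the NFA so that from the subset $S_v$ of value $v\ge 2$ some letter leads to the subset $S_{v-1}$ of value $v-1$ (a ``decrement''), while \emph{no} letter can decrease $v$ by more than $1$ on \emph{any} nonempty subset. A decrement deletes $k$ and inserts $1,\ldots,k-1$ when $k=\min S\ge 2$, and simply deletes $1$ when $1\in S$. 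Accordingly I would take two families of $n-1$ letters, $\alpha_2,\ldots,\alpha_n$ and $\beta_2,\ldots,\beta_n$ (so $2n-2$ in all), given by
\[
k\cdot\alpha_k=\{1,\ldots,k-1\},\qquad j\cdot\alpha_k=\{j,k\}\ \ (j<k),\qquad j\cdot\alpha_k=\{j\}\ \ (j>k),
\]
\[
1\cdot\beta_k=\{k\},\qquad j\cdot\beta_k=\{j\}\ \ (2\le j\le n).
\]
All images are nonempty, so this is a complete NFA. The delicate choice is $j\cdot\alpha_k=\{j,k\}$ instead of $\{j\}$ for $j<k$: this ``bloating'' is what stops $\alpha_k$ from collapsing $v$ when it is fed a set that is not of the form ``$S_v$ with $\min S_v=k$''.

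I would then verify two things. First, the decrement walk: if $v\ge 2$ is even, its lowest set bit lies at position $k-1$ with $k\ge 2$, so $\min S_v=k$ and $1,\ldots,k-1\notin S_v$, whence $S_v\cdot\alpha_k=(S_v\setminus\{k\})\cup\{1,\ldots,k-1\}=S_{v-1}$; if $v\ge 3$ is odd, then $1\in S_v$ and $S_v$ contains some $k\ge 2$, whence $S_v\cdot\beta_k=S_v\setminus\{1\}=S_{v-1}$. Iterating yields the path $Q=S_{2^n-1}\to S_{2^n-2}\to\cdots\to S_1=\{1\}$ of length $2^n-2$, through all nonempty subsets in reverse lexicographic order. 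Second, where singletons occur: they are the $S_v$ with $v$ a power of two, and $2^{n-1}$ is the largest power of two $\le 2^n-1$, so $\{n\}=S_{2^{n-1}}$ is reached from $Q$ after $(2^n-1)-2^{n-1}=2^{n-1}-1$ steps and from $\{n-1,n\}=S_{3\cdot 2^{n-2}}$ after $3\cdot 2^{n-2}-2^{n-1}=2^{n-2}$ steps.

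The heart of the argument is the matching lower bound, via the potential $\Phi(S)=v(S)-1$ and the one inequality $v(S\cdot a)\ge v(S)-1$ for every letter $a$ and every nonempty $S$, which I would check by a short case split. For $\beta_k$: if $1\notin S$ then $S\beta_k=S$; if $1\in S$ then $S\beta_k=(S\setminus\{1\})\cup\{k\}$, changing $v$ by $-1+2^{k-1}[\,k\notin S\,]\ge-1$. For $\alpha_k$: if $k\notin S$ then $S\subseteq S\alpha_k$, so $v$ does not drop; if $k\in S$ and $S\cap\{1,\ldots,k-1\}=\varnothing$ then $S\alpha_k=S_{v(S)-1}$; and if $k\in S$ meets $\{1,\ldots,k-1\}$ then $S\alpha_k=\{1,\ldots,k\}\cup\bigl(S\cap\{k+1,\ldots,n\}\bigr)$, whose value is $\ge v(S)$. (The naive variant with $j\cdot\alpha_k=\{j\}$ for $j<k$ fails precisely here: on a set meeting both $\{k\}$ and $\{1,\ldots,k-1\}$ it drops $v$ by up to $2^{k-1}$, and then $Q$ could be synchronised much faster than $2^{n-1}-1$.) Given the inequality, any word taking a nonempty $T$ to a singleton has length at least $\Phi(T)-\max_i\Phi(\{i\})=\Phi(T)-(2^{n-1}-1)$; with $T=Q$ and $T=\{n-1,n\}$ this gives exactly $2^{n-1}-1$ and $2^{n-2}$, matching the constructions above. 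Taking the target $\{1\}$ instead, a word from $Q$ to $\{1\}$ has length at least $\Phi(Q)-\Phi(\{1\})=2^n-2$; and on a path of exactly that length $\Phi$ must fall by $1$ at every step, so the visited subsets are forced to be $S_{2^n-1},\ldots,S_1$, i.e., all nonempty subsets in reverse lexicographic order.

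For transitivity on the nonempty state subsets it suffices to drive an arbitrary nonempty $S$ to $Q$ and then run the decrement walk $Q\to S_{v(T)}=T$. Put $m=\min S$: applying $\alpha_{m+1},\ldots,\alpha_n$ successively, skipping those whose index is already present, repeatedly uses $m\cdot\alpha_k=\{m,k\}$ to adjoin the missing elements above $m$ and reaches $\{m,m+1,\ldots,n\}$; if $m\ge 2$, one further $\alpha_m$ gives $\{1,\ldots,m-1\}\cup\{m+1,\ldots,n\}$ and a second $\alpha_m$ (now via $1\cdot\alpha_m=\{1,m\}$) restores $m$, yielding $Q$; if $m=1$, the successive $\alpha_k$ already produce $Q$. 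The main obstacle throughout is calibrating the transition tables so that the two demands --- that the decrement walk exist, and that no letter ever lower $v$ by more than one --- hold at once; they pull against each other, and it is exactly the bloating transitions of the $\alpha_k$ that reconcile them.
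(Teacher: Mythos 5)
Your construction is essentially the paper's: the same $2n-2$ letters realizing the binary decrement walk (remove the minimum state $k\ge 2$ and insert $1,\ldots,k-1$, or delete state $1$), the only difference being that your ``filler'' transitions bloat to $\{j,k\}$ where the paper sends states below the pivot to $Q$ --- which is why you need the pump-up-to-$Q$ argument for transitivity while the paper simply uses $\{1\}a_2=Q$. The proposal is correct, and your explicit potential argument (every letter lowers $v$ by at most one) supplies the no-shortcut verification that the paper leaves implicit.
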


\begin{proof}
To get from a subset without $1$ to its successor, we define
$$
(n a_i, (n-1) a_i, \ldots, 1 a_i) = (\{n\}, \{n-1\}, \ldots, \{i+1\}, 
\{1,2,\ldots,i-1\}, Q, \ldots, Q)
$$
for each $i \in \{2,3,\ldots,n\}$. The successor of a subset with $1$ is obtained 
by removing $1$, for which we use symbols defined by
$$
(n b_i, (n-1) b_i, \ldots, 1 b_i) = (\{n\}, \{n-1\}, \ldots, \{2\}, \{i\}) 
$$
for each $i \in \{2,3,\ldots,n\}$. The last claim follows from $\{1\}a_2 = Q$
\end{proof}

\begin{theorem}
Let $n \ge 3$. There exists a complete NFA with state set $Q = \{1,2,\ldots,n\}$ 
and $\frac12 n^2 - \frac12 n + 1$ symbols, in which the shortest path from $Q$ to $\{3\}$ 
first traverses all subsets of size at least $2$ of $Q$ in reverse lexicographic order,
and next traverses $\{2\}, \{1\}, \{n\}, \{n-1\}, \ldots, \{3\}$, in that order.
Furthermore, the NFA is transitive on the set of nonempty state subsets.
\end{theorem}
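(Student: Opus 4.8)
The plan is to modify the construction of the previous theorem so that the ``shortcuts'' which let $Q$ synchronize in few steps are blocked, and to append a chain that forces every singleton to be visited at the end. For $S\subseteq Q$ write $v(S)=\sum_{i\in S}2^{i-1}$; reverse lexicographic order is decreasing $v$, so the size-$\ge 2$ subsets run from $Q$ (value $2^n-1$) down to $\{1,2\}$ (value $3$). I would take the following $\binom n2+1$ symbols, each given by its action on states, every image being nonempty so that the NFA is complete: for $2\le i\le n-1$, a symbol $a_i$ with $i\mapsto\{1,\dots,i-1\}$, $j\mapsto\{j\}$ for $j>i$, and $j\mapsto Q$ for $j<i$; for $3\le\ell\le n$, a symbol $d_\ell$ with $1\mapsto\{1,\dots,\ell-2\}$, $\ell\mapsto\{\ell-1\}$, and $j\mapsto Q$ for $j\notin\{1,\ell\}$; a symbol $b$ with $1\mapsto\{2\}$ and $j\mapsto\{j\}$ for $j\ge 2$; for $3\le i<j\le n$, a symbol $g_{i,j}$ with $1\mapsto\{i,j\}$ and $\ell\mapsto\{\ell\}$ for $\ell\ge 2$; and a symbol $c$ with $1\mapsto\{n\}$ and $j\mapsto Q$ for $j\ge 2$. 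These total $(n-2)+(n-2)+1+\binom{n-2}2+1=\binom n2+1$.

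Next I would exhibit the intended $(2^n-2)$-step path, which visits each nonempty subset exactly once: first the descent through all size-$\ge 2$ subsets in decreasing $v$-order, ending at $\{1,2\}$, then $\{2\},\{1\},\{n\},\{n-1\},\dots,\{3\}$. The descent steps are: from a size-$\ge 2$ subset $S$ with $1\notin S$, apply $a_{\min S}$ to obtain the $v$-decrement $\{1,\dots,\min S-1\}\cup(S\setminus\{\min S\})$, which is again of size $\ge 2$; from $S=\{1\}\cup T$ with $|T|\ge 2$, apply $b$ (if $2\in T$) or any $g_{i,j}$ with $i,j\in T$ (if $2\notin T$) to delete the $1$, obtaining $T$; from $S=\{1,\ell\}$ with $\ell\ge 3$, apply $d_\ell$ to obtain $\{1,\dots,\ell-1\}$, which is the next size-$\ge 2$ subset, the singleton $\{\ell\}$ (value $2^{\ell-1}$) being jumped over. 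The final steps $\{1,2\}\to\{2\}$, $\{2\}\to\{1\}$, $\{1\}\to\{n\}$, and $\{\ell\}\to\{\ell-1\}$ for $4\le\ell\le n$ use $b$, $a_2$, $c$, and $d_\ell$ respectively.

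The core of the argument is a potential function. Let $\Phi(S)$ denote the number of steps from $S$ to $\{3\}$ along the intended path; thus, for $|S|\ge 2$, $\Phi(S)$ equals $n$ plus the number of size-$\ge 2$ subsets $S'$ with $v(S')<v(S)$, while $\Phi(\{2\})=n-1$, $\Phi(\{1\})=n-2$, and $\Phi(\{\ell\})=\ell-3$ for $3\le\ell\le n$. I would prove that $\Phi(Sx)\ge\Phi(S)-1$ for every symbol $x$ and every nonempty $S$, with equality only at the designated moves above, by running through the five families: $Sa_i$ is $Q$ if $S$ meets $\{1,\dots,i-1\}$, is $S$ if $\min S>i$, and is the $v$-decrement of $S$ otherwise; $Sd_\ell=Q$ unless $S\subseteq\{1,\ell\}$, and $\{1\},\{\ell\},\{1,\ell\}$ are handled by hand; $Sb$ and $Sg_{i,j}$ equal $S$ when $1\notin S$, and for $1\in S$ they replace the $1$ by $2$, resp.\ by $i$ and $j$, which deletes the $1$ exactly when $2\in S$, resp.\ $i,j\in S$, and otherwise strictly raises $v$; and $Sc=Q$ unless $S=\{1\}$. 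Granting this, $\Phi$ is the exact distance from any subset to $\{3\}$, so a shortest path from $Q$ must strictly lower $\Phi$ at each step; and since every $\Phi$-lowering application at a given subset lands on the same successor, the shortest path is forced to be the intended one, which is the theorem. Transitivity follows at once: every nonempty $S$ reaches $Q$ (apply $c$ when $S$ contains a state $\ge 2$; otherwise $S=\{1\}$ and $\{1\}a_2=Q$), and $Q$ reaches all nonempty subsets along the intended path.

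The main obstacle — the only point where the construction could silently fail — is making the inequality $\Phi(Sx)\ge\Phi(S)-1$ watertight, above all across the junction between the descent and the singleton chain: one has to check that $\Phi(\{1,2\})=n$, $\Phi(\{2\})=n-1$, $\Phi(\{1\})=n-2$, $\Phi(\{n\})=n-3$ are consecutive and, crucially, that no ``collapse to $Q$'' or wrong-symbol move ever lands on a subset much nearer $\{3\}$. This is exactly what shapes the new symbols: $d_\ell$ sends every state outside $\{1,\ell\}$ to $Q$ so it is usable only on $\{1,\ell\}$ (and on $\{\ell\}$ in the chain), and $g_{i,j}$ sends $1$ to a pair, not a single state, so that applying it to $\{1,k\}$ cannot reach the singleton $\{k\}$ and thereby shortcut into the chain. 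The remaining ingredients are routine computations with $v$: the $v$-decrement of a $1$-free size-$\ge 2$ subset is never a singleton, and $\{1,\dots,\ell-1\}$, resp.\ $T$, is the $v$-successor of $\{1,\ell\}$, resp.\ $\{1\}\cup T$, among size-$\ge 2$ subsets.
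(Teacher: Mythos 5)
Your construction coincides symbol-for-symbol with the paper's: your $a_i$, $b$, $g_{i,j}$, $d_\ell$, $c$ are exactly the paper's $a_i$, $b_{22}$, $b_{ij}$, $c_\ell$, $c_1$, and your count $\binom{n}{2}+1=\frac12 n^2-\frac12 n+1$ matches. The paper only states the construction and leaves the no-shortcut check implicit, so your potential-function verification $\Phi(Sx)\ge\Phi(S)-1$ (and transitivity via $c$ and $a_2$) is correct added detail rather than a genuinely different route.
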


\begin{proof}
Just as before, to get from a subset of size at least $2$ without $1$ to its 
successor, we define
$$
(n a_i, (n-1) a_i, \ldots, 1 a_i) = (\{n\}, \{n-1\}, \ldots, \{i+1\}, 
\{1,2,\ldots,i-1\}, Q, \ldots, Q)
$$
but only for each $i \in \{2,3,\ldots,n-1\}$. For the successor of a subset of
size at least $2$ with $1$, we use symbols $b_{22}$ and $b_{ij}$ with 
$3 \le i < j \le n$, defined by
$$
(n b_{ij}, (n-1) b_{ij}, \ldots, 1 b_{ij}) = (\{n\}, \{n-1\}, \ldots, \{2\}, \{i,j\}) 
$$
and symbols defined by $i c_i = \{i-1\}$ and
$$
(n c_i, (n-1) c_i, \ldots, 1 c_i) = (Q,\ldots,Q,\{i-1\},Q,\ldots,Q,
\{1,2,\ldots,i-2\})
$$
for each $i \in \{3,4,\ldots,n\}$. If $i \ge 4$, then symbol $c_i$ sends $\{i\}$ 
to its successor, too. Since $a_2$ sends $\{2\}$ to its successor, we only need a 
symbol which sends $\{1\}$ to its successor. For that purpose, we define 
$c_1$ by $i c_1 = \{n\}$ if $i = 1$ and $i c_1 = Q$ otherwise. The last claim
follows from $\{3\}c_1 = Q$.
\end{proof}

\begin{corollary}
The maximum synchronization length for state subsets of size $|S| \ge 2$ 
in complete NFAs with $n$ states is 
$$
2^n - n - 2^{n-|S|}
$$
\end{corollary}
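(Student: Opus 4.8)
The plan is to prove the upper bound $2^{n}-n-2^{n-|S|}$ by a counting argument on a shortest synchronizing word, and to attain it by reading the construction of the previous theorem from a suitably chosen start subset. For the upper bound, fix a complete NFA with $|Q|=n$, a subset $S$ with $|S|=m\ge 2$ that synchronizes, and a shortest synchronizing word $w=w_{1}w_{2}\cdots w_{L}$ for $S$, with path $S=S_{0},S_{1},\ldots,S_{L}$ and $|S_{L}|=1$. Minimality of $w$ forces three things: the $S_{i}$ are pairwise distinct; $|S_{i}|\ge 2$ for every $i<L$, since otherwise $w_{1}\cdots w_{i}$ would already synchronize $S$; and no $S_{i}$ with $i\ge 1$ is a proper superset of $S$, because if $S\subsetneq S_{i}$ then $S w_{i+1}\cdots w_{L}$ is a nonempty subset of the singleton $S_{i} w_{i+1}\cdots w_{L}=S_{L}$, so the suffix $w_{i+1}\cdots w_{L}$ synchronizes $S$ in $L-i<L$ steps, while $S_{0}=S$ is trivially not a proper superset of itself. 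Hence $S_{0},\ldots,S_{L-1}$ are $L$ pairwise distinct subsets of size at least $2$, none of them a proper superset of $S$, and there are exactly $(2^{n}-n-1)-(2^{n-m}-1)=2^{n}-n-2^{n-m}$ such subsets; therefore $L\le 2^{n}-n-2^{n-m}$.

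For the lower bound, the case $n=2$ (so $m=n=2$ and the value is $1$) is handled by any two-state synchronizing automaton, so assume $n\ge 3$ and take the complete NFA of the previous theorem, but now with start subset $S=\{n-m+1,n-m+2,\ldots,n\}$, the set of the $m$ largest states. Writing $v(T)=\sum_{i\in T}2^{i-1}$, the canonical path of that automaton visits the subsets of size at least $2$ in strictly decreasing order of $v$, and reaches the singleton $\{2\}$ immediately after the last of them, namely $\{1,2\}$ (with $v=3$). I would verify, from the explicit transition functions of the symbols $a_{i}$, $b_{ij}$, $c_{i}$, that from $S$ a shortest synchronizing word is still forced to follow this canonical path; its length is then the number of subsets of size at least $2$ with $v(T)\le v(S)$. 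Since every singleton has $v$-value at most $2^{n-1}\le v(S)=2^{n}-2^{n-m}$, that number equals $v(S)-n=2^{n}-n-2^{n-m}$, which together with the upper bound proves the claimed value.

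The one point that needs real care is the forced-path verification in the lower bound: the previous theorem establishes the ``no shortcut'' property only for the path starting at $Q$, and one must check that it persists when the start subset is an arbitrary subset lying on the canonical order — concretely, that from $S=\{n-m+1,\ldots,n\}$ every letter either fixes $S$, or maps it to a proper superset of $S$ or otherwise back towards $Q$ (and hence cannot shorten a synchronizing word), or else produces the canonical successor of $S$. This is routine given the transition tables, but it is the crux of the argument, and the extreme cases $m=2$, $m=n$, and small $n$ are worth verifying directly.
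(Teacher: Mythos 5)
Your proposal is correct and follows essentially the same route as the paper: an upper bound by counting the subsets available to a shortest synchronizing path (no repeats, no proper supersets of $S$, no empty set, singletons only at the end), and a matching lower bound by running the construction of the preceding theorem from the start subset $S=\{n-|S|+1,\ldots,n\}$. You simply spell out the counting and the monotonicity argument that the paper leaves implicit, and the forced-path verification you flag as the crux is exactly what the paper also relies on (and asserts without detail) when it invokes the theorem.
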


\begin{proof}
Let $|S|$ be a subset of the $n$ states of an NFA. In a synchronization path
of $|S|$, the $2^{n-|S|} - 1$ proper supersets of $S$ do not occur. $\varnothing$
does not occur either, and singleton sets only occur at the end. This
reduces the first claim to the last claim. The last claim follows by taking 
$S = \{n-|S|+1,n-|S|+2,\ldots,n\}$ in the above theorem.
\end{proof}

In \cite{arXiv:1801.05391}, careful synchronization is defined for NFAs in general, not 
just PFAs. For that type of synchronization, we have the same results as in the 
above corollary. This is because we can turn an NFA which synchronizes some subset 
$S$ carefully into a complete NFA as follows: we replace each $ix = \varnothing$ 
by $ix = Q$, where $i$ is any state, $x$ is any symbol, and $Q$ is the subset of 
all states.

\subsection{PFAs and subsets of size 3}

The following theorem yields the lengths of $3$-set synchronization for PFAs.

\begin{theorem}
Let $n \ge 3$. Then there exists a PFA with $n$ states and $2n -3$ symbols,
say with states $1,2,\ldots,n$, such that the shortest path from
$\{n-2,\allowbreak n-1,n\}$ to $\{n\}$ traverses all subsets of size $1$, $2$ and $3$.
\end{theorem}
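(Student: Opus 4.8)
The plan is to exhibit an explicit PFA together with a linear order on its subsets of size at most~$3$, and to control the length of every synchronising path by a rank argument.

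First, the combinatorial skeleton. In a PFA every letter sends a set $S$ to a set of size at most~$|S|$, so along any path from the $3$-element set $\{n-2,n-1,n\}$ to a singleton the sizes are non-increasing: the path first runs through some $3$-subsets, then through some $2$-subsets, then through some $1$-subsets. I therefore fix a linear order $X_1 \prec X_2 \prec \cdots \prec X_M$ on all $M := \binom n1 + \binom n2 + \binom n3$ subsets of size at most~$3$ in which the $3$-subsets come first, with $X_1 = \{n-2,n-1,n\}$ and the last $3$-subset equal to $\{1,2,3\}$, then the $2$-subsets, then the $1$-subsets, with $X_M = \{n\}$; within each size class the order is a variant of the reverse-lexicographic order used in the complete-NFA constructions above. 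Put $\rho(X_k) := k$. The theorem reduces to choosing the $2n-3$ letters so that (a) for every $k<M$ some letter sends $X_k$ to $X_{k+1}$, and (b) for every letter $a$ and every subset $X$ of size at most~$3$, either $Xa=\varnothing$ or $\rho(Xa)\le \rho(X)+1$.

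For the construction I would use a ``decrement'' family $d_2,\dots,d_{n-1}$, where $d_i$ sends $i\mapsto i-1$, fixes every state above~$i$, and sends every state below~$i$ to~$\bot$; such a $d_i$ realises the reverse-lexicographic successor of each $3$- or $2$-subset whose minimum is~$i$ and annihilates every subset meeting $\{1,\dots,i-1\}$. To this I would add a short family of ``reset'' letters handling the few transitions in which the reverse-lexicographic successor is not a single decrement (such as $\{1,b,c\}\mapsto\{b-2,b-1,c\}$), again undefined outside their intended scope; a single letter~$m$ performing both the drop from the last $3$-subset to the first $2$-subset and the drop from the last $2$-subset to the first $1$-subset, taken to be defined only on~$\{1,2,3\}$; and a single ``successor of a singleton'' letter that walks the $1$-subsets towards~$\{n\}$ and becomes~$\bot$ as soon as it would leave range. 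The index ranges are set so that the families contain $2n-3$ letters altogether, and the liberal use of~$\bot$ is what makes each letter act trivially, or kill the set, outside the subsets it was designed for.

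Granting (a) and (b) the theorem follows quickly. By~(a) the word $s_1 s_2\cdots s_{M-1}$ with $X_k s_k = X_{k+1}$ takes $\{n-2,n-1,n\}$ to~$\{n\}$, so a path of length $M-1$ exists. By~(b), and since the per-step increments of~$\rho$ along any path telescope to $\rho(\{n\}) - \rho(\{n-2,n-1,n\}) = M-1$, every path from $\{n-2,n-1,n\}$ to~$\{n\}$ has length at least $M-1$, with equality only when every step raises~$\rho$ by exactly~$1$, that is, only when the path is $X_1, X_2, \dots, X_M$. Hence the shortest such path has length $M-1$ and visits every subset of size~$1$, $2$, and~$3$. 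The main obstacle is exactly the verification of~(b): each letter also acts on subsets of the other sizes, and one must check it never jumps forward across a phase boundary --- in particular never collapses a $3$-subset directly to a $1$-subset, and never collapses a $2$-subset except at the single designated drop. This requirement is what forces the precise orders inside the size classes and the undefined values above; once those are pinned down,~(b) is a finite check over the handful of letter families.
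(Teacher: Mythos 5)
Your reduction to conditions (a) and (b) via the rank $\rho$ is sound, and it is essentially the skeleton of the intended argument: the paper likewise fixes one linear order on the subsets of size at most $3$, exhibits letters realizing each successor step, and the point of the verification is that no letter ever moves a subset forward by more than one position (in particular never collapses a triple or a pair prematurely). But the entire content of the theorem is the explicit choice of the orders and of the $2n-3$ letters, and this is exactly what your sketch does not deliver. Count your letters: $d_2,\dots,d_{n-1}$ are $n-2$; the reset transitions are not ``few'' --- every triple $\{1,b,c\}$ needs one (these can be shared over $c$, giving about $n-3$ letters that decrement the middle element), and every triple $\{1,2,c\}$ needs a further letter that decreases the maximum, which cannot coincide with any of the previous ones because those must kill state $2$; together with $m$ and the singleton letter you are near $3n$ symbols, not $2n-3$. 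The sentence ``the index ranges are set so that the families contain $2n-3$ letters altogether'' is precisely the claim that has to be proved. The paper reaches $2n-3$ only through a sharing that your decrement-based order does not admit: its $n-1$ letters advance the pairs in lexicographic order and \emph{simultaneously} perform all within-block moves of the triples (whose order is lexicographic after negating the smallest element) and the walk through the singletons, so that only the $n-2$ block transitions of the triples need new letters; it even points out that ordering the triples ``the same way'' as the pairs is impossible because it creates shortcuts skipping the pairs $\{i,n\}$, so the compatibility of the three orders is delicate and cannot be dismissed as ``a finite check'' (the construction is a family parametrized by $n$, so no finite check settles it anyway).

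A second concrete failure is the single ``successor of a singleton'' letter. To walk through all $n$ singletons and end at $\{n\}$ it must be defined on at least $n-1$ states, so it acts on many pairs and triples, and then condition (b) breaks: with your decrement order on pairs, a letter that moves every singleton one step forward sends $\{2,c\}$ to $\{1,c-1\}$, which skips $\{1,c\}$ and lands in a later block, a forward jump of many positions. (Moreover, since $m$ is defined only on $\{1,2,3\}$ and must identify two of these states, the first pair is $\{1,2,3\}m$ and the first singleton is the common image; with your pair order this forces the singleton walk to start at $n-1$ or $n$, not at $1$, so the walking letter cannot simply be an increment either.) The paper avoids this by having no such letter at all: each singleton step $\{i\}\mapsto\{i+1\}$ is done by its own letter $a_i$, which is the identity on $1,\dots,i-1$ and undefined on $i+1,\dots,n-1$, so its action on larger subsets is harmless. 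In short, the potential-function frame is right, but the construction you propose does not meet the symbol bound and at least one of its letters necessarily violates (b); repairing it amounts to redoing the paper's actual bookkeeping.
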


\begin{proof}
We traverse the subsets of size $1$ and $2$ in lexicographic order. For that
purpose, we define
$$
(1 a_i, 2 a_i, \ldots, n a_i) = 
\begin{cases}
(1,\ldots,i-1,i+1,\bot,\ldots,\bot,i+2) & 1 \le i \le n-2 \\
(1,\ldots,i-1,i+1,i) & i = n-1
\end{cases}
$$
Traversing the subsets of size $3$ in the same order is impossible, since it would 
yield $n-3$ shortcuts for the subsets of size $2$: subset $\{i,n\}$ can be skipped for 
all $i \le n-3$. The order of the subsets of size $3$ will be lexicographic as well, but
first, we negate the first state of the $3$-set. So subset $\{n-2,n-1,n\}$ is ordered
by way of $\{2-n,n-1,n\}$ and is the first $3$-set, and subset $\{1,n-1,n\}$ is ordered
by way of $\{-1,n-1,n\}$ and is the last $3$-set. 

With the above symbols, we can already move subsets of size $3$ to their successors, 
provided the first state does not need to be changed for that. 
To move other $3$-sets to their successors and to move the last subset of size
$2$ and $3$ to the first subset of size $1$ and $2$ respectively, we add the following 
symbols.
$$
(0 b_i, 1 b_i, \ldots, n b_i) =
\begin{cases}
(i+1, \bot, \ldots, \bot, i, i) & i = 1 \\
(1,\ldots,i-2,\bot,i+1,\bot,\ldots,\bot,i,i-1) & 2 \le i \le n-2
\end{cases}
$$
It is a straightforward exercise to verify that no shortcuts are possible.
\end{proof}

\begin{corollary}
The maximum synchronization length for state subsets of size $3$ in PFAs with 
$n$ states is 
$$
\binom{n}{3} + \binom{n}{2}
$$
This maximum cannot be obtained for PFAs which synchronize a state subset
of size $4$.
\end{corollary}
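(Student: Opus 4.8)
The plan is to establish the corollary in two parts. For the first claim, I would argue that the maximum synchronization length for a $3$-set is exactly the number of subsets of size $1$, $2$, and $3$ that a synchronization path can contain. The upper bound is immediate: a path from a $3$-set to a singleton can only pass through subsets of size at most $3$ (since applying a symbol to a set can only keep its size the same or decrease it — here careful synchronization forbids $\bot$, so sizes never jump up past $3$), and it can contain each such subset at most once (otherwise it would not be shortest). There are $\binom{n}{3} + \binom{n}{2} + \binom{n}{1}$ such subsets, but the starting $3$-set is counted among the $\binom{n}{3}$ and the path has $\binom{n}{3} + \binom{n}{2} + n$ vertices, hence length $\binom{n}{3} + \binom{n}{2} + n - 1$ steps at most. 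Wait — I need to recheck against the stated formula $\binom{n}{3} + \binom{n}{2}$; the resolution is that singletons collapse, i.e. once the path reaches size $1$ it is done, so only one singleton occurs, and the path visits all $\binom{n}{3}$ triples and all $\binom{n}{2}$ pairs and then one singleton, giving $\binom{n}{3} + \binom{n}{2}$ edges. So the clean statement is: length $\le \binom{n}{3} + \binom{n}{2}$, with the $+1$ for the final singleton cancelled by the fact that the start $3$-set contributes a vertex but not an edge. The matching lower bound is exactly the content of the preceding theorem, which exhibits a PFA with $n$ states whose shortest path from $\{n-2,n-1,n\}$ to $\{n\}$ traverses all subsets of size $1$, $2$, and $3$; counting edges in that path gives precisely $\binom{n}{3} + \binom{n}{2}$.

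For the second claim — that this maximum cannot be attained by a PFA which also carefully synchronizes some $4$-set — I would mimic the argument already used in the proof of the second theorem of Section~\ref{subsetsync5states} (the $n=5$, $|S|=4$ case). Suppose, for contradiction, that some PFA on $n$ states carefully synchronizes a $4$-set $T$ and also has a $3$-set $S$ whose shortest synchronizing word has length exactly $\binom{n}{3} + \binom{n}{2}$, so that the corresponding path visits every $2$-set and every $3$-set. Since $T$ synchronizes, there is a word $u$ with $|Tu| = 1$; let $a$ be the last symbol along this word for which the image still has size $\ge 2$, and write $T' = \{q_1,q_2,q_3,q_4\}$ for that image of size exactly $4$ (if the image never has size $4$ after the first step, take $T$ itself with its first symbol). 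Then $|T'a| < 4$, so some two states of $T'$ are identified by $a$, say $q_1 a = q_2 a$. Consequently both $3$-sets $\{q_1,q_2,q_3\}$ and $\{q_1,q_2,q_4\}$ are mapped by $a$ to sets of size at most $2$.

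The key observation is that these two distinct $3$-sets both admit a "shortcut" — a single symbol taking them directly down to size $\le 2$ — whereas in the extremal path realizing length $\binom{n}{3} + \binom{n}{2}$, the symbol applied to any $3$-set must take it to another $3$-set (except at the unique transition from a $3$-set to a $2$-set). A shortest path cannot afford two $3$-sets that each jump straight to a $2$-set: the path has room for exactly one such $3$-set (namely the last one before the $2$-sets begin). Hence at least one of $\{q_1,q_2,q_3\}, \{q_1,q_2,q_4\}$ must be omitted from the path, so the path misses a $3$-set and its length is strictly less than $\binom{n}{3} + \binom{n}{2}$, a contradiction. The main obstacle, and the point requiring the most care, is the edge case where the $4$-set $T$ already collapses below size $4$ on its first symbol (so no size-$4$ image $T'$ distinct from $T$ exists); there one works directly with $T$ and its first symbol, and one must check that $T$ itself need not lie on the $3$-set/$2$-set path (it has size $4$, so it does not), and that the two resulting $3$-subsets of $T$ are genuinely distinct and genuinely shortcut — both of which follow from $q_1 a = q_2 a$ exactly as above. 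I would also note in passing that the second claim, combined with the first, shows the extremal PFAs of the theorem are in a sense "tight": they cannot be enlarged to handle a $4$-set without losing length, paralleling the phenomenon recorded in the table for $n = 5$.
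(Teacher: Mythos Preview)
Your proposal is correct and follows essentially the same approach as the paper: the upper bound for the first claim comes from counting the at most $\binom{n}{3}+\binom{n}{2}$ non-singleton subsets of size $\le 3$ on a shortest path, the lower bound is the preceding theorem, and the second claim is proved by exactly the paper's mechanism --- a symbol $a$ collapsing some $4$-set $\{q_1,q_2,q_3,q_4\}$ yields $q_1a=q_2a$, hence two distinct $3$-sets $\{q_1,q_2,q_3\}$ and $\{q_1,q_2,q_4\}$ both shortcut to size $\le 2$, so not all $3$-sets are needed. Your exposition is considerably more elaborate than the paper's three-line argument (and your description of selecting $a$ and $T'$ is garbled --- you want the \emph{first} symbol along the word at which the size drops below $4$, not ``the last symbol for which the image still has size $\ge 2$''), but the mathematical content is the same.
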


\begin{proof}
The first claim follows from the above theorem. To prove the last claim,
let $a$ be a symbol which reduces a subset of size $4$, say $\{q_1,q_2,q_3,q_4\}$,
to a smaller subset. Say that $q_1 a = q_2 a$. Then both $\{q_1,q_2,q_3\}a$ and
$\{q_1,q_2,q_4\}a$ have size less than $3$. Hence not all subsets of size $3$
need to be traversed.
\end{proof}

Notice that the above corollary proves that the maximum synchronization length
$p(4)$ of a PFA with $4$ states is at most $10$, and that $p(4) = 10$.

\section{Asymptotics for subset synchronization of automata}
\label{subsetsyncasymp}

We start with some results for an arbitrary number of symbols.

\begin{proposition} \label{csubscl}
There exists a transitive PFA $M$ with $n$ states,
which has a subset which synchronizes carefully in 
$$
\Omega \big(2^n/\sqrt{n}\big) = \Omega (1.9999^n)
$$
steps.
\end{proposition}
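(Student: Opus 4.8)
The plan is to build a PFA whose state set is partitioned into a "control" part that cycles through binary strings and a "target" part that records whether a careful-synchronization attempt is still alive, so that the unique shortest word reaching a singleton must enumerate a large collection of subsets. Concretely, I would take the state set to consist of $\{1,2,\ldots,n\}$ with $n$ chosen so that the number of subsets of a fixed middle size $k \approx n/2$ is $\binom{n}{k} = \Theta(2^n/\sqrt n)$; the idea is to force a synchronization path that visits all $k$-element subsets of some auxiliary index set (or an affine-combinatorial family of comparable cardinality), using the standard trick of a "counter" letter that moves from one subset to the lexicographically next one and letters that block (send to $\bot$) any premature collapse. This is exactly the mechanism already used in the proof of the $2^n - n - 2^{n-|S|}$ corollary for complete NFAs, specialized and pruned via $\bot$-transitions so that only the middle layer is forced; since for PFAs we are allowed $\bot$, the unwanted shortcut transitions can simply be deleted.

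The key steps, in order, would be: (1) fix $k = \lfloor n/2 \rfloor$ and identify the family $\mathcal{F}$ of $k$-subsets to be traversed, noting $|\mathcal{F}| = \binom{n}{k} = \Theta(2^n/\sqrt n)$ by Stirling; (2) impose a linear order on $\mathcal{F}$ (e.g. Gray-code or reverse-lexicographic) and, for each pair of consecutive subsets, introduce a symbol (or reuse symbols as in the NFA construction) that carefully maps one subset to the next while sending at least one state to $\bot$ on any subset that is not the intended current one, so no symbol can "skip ahead"; (3) attach a short entry gadget that maps the designated start subset into the first element of $\mathcal{F}$ and a short exit gadget that, only from the last element of $\mathcal{F}$, collapses down to a singleton; (4) verify carefully that every word that reaches a singleton from the start subset is forced to pass through all of $\mathcal{F}$ in order, giving length $\Omega(|\mathcal{F}|) = \Omega(2^n/\sqrt n)$; (5) add, if necessary, a few extra permutation symbols (like the $a_2$ and $c_1$ "return" letters in the earlier theorems) so that the PFA is transitive on states — transitivity only requires reachability between single states, which is cheap to arrange and does not interfere with careful synchronization since those letters are total permutations.

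I expect the main obstacle to be step (2) combined with step (4): arranging that the $\bot$-transitions genuinely rule out every shortcut, so that the shortest careful-synchronizing word is not merely of length $\Omega(2^n/\sqrt n)$ but actually forced through the full family, while simultaneously keeping the automaton transitive on states. The tension is that careful synchronization is fragile — a single state mapped to $\bot$ kills the whole subset — so the blocking letters must be designed so that on the "correct" current subset no state hits $\bot$, yet on any other subset of the relevant sizes some state does; getting a clean uniform description of such letters (rather than one ad hoc letter per subset, which would blow up the alphabet unnecessarily, though the proposition permits an arbitrary alphabet) is the delicate part. A secondary, more routine concern is checking the $\Omega(1.9999^n)$ claim, which is immediate from $\binom{n}{\lfloor n/2\rfloor} \sim 2^n \sqrt{2/(\pi n)}$ once $n$ is large enough, and the translation between "number of states" in the final automaton and the parameter $n$ used in the combinatorial count (the control gadget and entry/exit gadgets add only $O(\log n)$ or $O(1)$ extra states, which does not affect the exponential rate).
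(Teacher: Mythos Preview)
Your core idea---force a traversal of all $\lfloor n/2\rfloor$-subsets, of which there are $\binom{n}{\lfloor n/2\rfloor}=\Theta(2^n/\sqrt n)$---is exactly what the paper does, but the paper's execution is far simpler than what you sketch. There is no control part, no entry or exit gadget, and no extra state: the paper lists the $\lfloor n/2\rfloor$-subsets of $\{1,\ldots,n\}$ as $S_1,\ldots,S_b$, introduces for each $i<b$ a symbol $a_i$ that is any bijection $S_i\to S_{i+1}$ and is \emph{undefined on all of $Q\setminus S_i$}, and lets $a_b$ collapse $S_b$ to a singleton (again undefined outside $S_b$). Because $a_i$ is undefined off $S_i$, the only symbol carefully applicable at $S_j$ is $a_j$, so the path $a_1a_2\cdots a_b$ is forced. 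This is precisely the ``one ad hoc letter per subset'' option you mention and then set aside; for this proposition it is the whole proof, and the obstacle you anticipate in steps~(2)--(4) simply evaporates.

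Your step~(5), however, contains a real error. You assert that adding total permutation symbols for transitivity ``does not interfere with careful synchronization since those letters are total permutations.'' The opposite is true: \emph{because} a permutation $\pi$ is defined everywhere, it is carefully applicable at every $S_i$, and $S_i\pi$ is again some $\lfloor n/2\rfloor$-subset $S_j$, possibly far ahead in your ordering. Such a symbol creates shortcuts and can destroy the lower bound. The paper achieves transitivity differently: it adds one more \emph{partial} symbol $a_0$ that bijects $S_b$ onto $S_1$ and is undefined outside $S_b$. This closes the $S_i$ into a cycle without introducing any shortcut, and state-transitivity then follows from a short orbit argument (every forward orbit meets each $S_j$, hence has size exceeding $\lceil n/2\rceil$, hence contains some $S_j$ entirely, hence---cycling through the $a_i$---equals $Q$).
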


\begin{proof}
Let $S_1, S_2, \ldots, S_b$ be the subsets of size $\lfloor n/2 \rfloor$
of the $n$ states. Take for each $i < b$ a symbol $a_i$ such that
$S_i a_i = S_{i+1}$ and $a_i$ is undefined outside $S_i$. Take 
$a_b$ such that $S_b a_b$ is a singleton and $a_b$ is undefined outside $S_b$.
The word $a_1 a_2 \ldots a_b$ is carefully synchronizing for subset $S_1$, and
is a prefix of every other such word. For even $n$, the estimate 
$\Theta(2^n / \sqrt{n})$ for $b$ can be found on the internet. 
For odd $n$, $b$ is half of the value it would have if $n$ was $1$ larger.

Take $a_0$ such that $S_b a_0 = S_1$ and $a_0$ is undefined outside $S_b$.
Then the word $a_1 a_2 \ldots a_b$ is maintained as the shortest carefully 
synchronizing word. The orbit of a state $q$ has size
$> \lceil n/2 \rceil$, because it intersects with every subset of size 
$\lfloor n/2 \rfloor$. So this orbit contains a subset of 
$\lfloor n/2 \rfloor$ states, hence it contains all states. 
So $M$ is transitive.
\end{proof}

The first claim of the following result can be found in \cite{Burkhard}.

\begin{proposition}
There exists a DFA $M$ with $n$ states, 
which has a subset which synchronizes in 
$$
\Omega \big(2^n/\sqrt{n}\big) = \Omega (1.9999^n)
$$
steps. Furthermore, the DFA is transitive up to $2$ sink states.

There exists a transitive DFA $M$ with $n$ states, 
which has a subset which synchronizes in 
$$
\Omega \big(2^{n/2}/\sqrt{n}\big) = \Omega (1.4142^n)
$$
steps.
\end{proposition}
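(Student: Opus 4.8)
The plan is to handle the two claims separately, since they use different base constructions. For the first claim (the $\Omega(2^n/\sqrt n)$ DFA with two sink states), I would take the Ben-Or–style construction attributed to Burkhard and make it deterministic by completing the partial maps of Proposition~\ref{csubscl}. Concretely, list the $\binom{n}{\lfloor n/2\rfloor}$ subsets of size $\lfloor n/2\rfloor$ as $S_1,\dots,S_b$, and for each $i$ pick a letter $a_i$ whose action on $S_i$ is a bijection onto $S_{i+1}$ (indices read so that $S_{b+1}$ is a singleton); outside $S_i$, instead of leaving $a_i$ undefined as in the PFA case, route every remaining state to one of two fresh sink states $z_0,z_1$ — splitting them so that $z_0,z_1$ are not merged by any $a_i$. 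A separate letter can be used to merge $z_0$ with $z_1$, but that letter should be chosen so that it does \emph{not} create a shortcut on the orbit of $S_1$, e.g.\ by having it act as the identity on all original states. Then the shortest word collapsing $S_1$ still has to walk through all $b$ subsets (the sink states carry no information that helps), giving length $\Omega(b) = \Omega(2^n/\sqrt n)$. Transitivity up to the two sinks is inherited from the orbit argument of Proposition~\ref{csubscl}: the orbit of any original state meets every $\lfloor n/2\rfloor$-subset, hence contains all original states, and the sinks are reachable from everything.

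For the second claim I would aim for a genuinely transitive DFA, paying the expected price of a square-rooted bound. The idea is to run the previous "size-$\lfloor n/2\rfloor$ walk" construction on only half of the state set — say on $k := \lfloor n/2\rfloor$ designated states, using subsets of size $\lfloor k/2\rfloor$ — so the walk has length $\Omega(2^{k/2}/\sqrt k)=\Omega(2^{n/4}/\sqrt n)$... which is too weak. The better route is to keep the walk over \emph{all} $n$ states as in claim~1 but to add, on top of the "completion to sinks", a transitivity-restoring mechanism that costs only a constant factor in the exponent: replace the two sinks by a directed cycle through roughly $n/2$ states equipped with a single rotation letter $\rho$, and arrange the "completion" maps $a_i$ to dump the off-$S_i$ states onto points of this cycle rather than onto static sinks. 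Then $\rho$ makes the whole automaton strongly connected, while the synchronizing word for $S_1$ still cannot avoid traversing the subsets — but now one must also collapse the cycle, and the cheapest way to do that interacts with the $a_i$'s. Counting carefully, the surviving lower bound is $\Omega(2^{n/2}/\sqrt n)$ because we can only afford to run the "hard" subset walk on about $n/2$ of the states once the remaining $n/2$ are committed to the transitivity cycle; this matches the stated $\Omega(1.4142^n)$.

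The key steps, in order, are: (1) fix the enumeration $S_1,\dots,S_b$ and the bijective letters $a_i$ on the $S_i$; (2) specify the completion of each $a_i$ outside $S_i$ — onto two sinks for claim~1, onto a rotation cycle for claim~2 — and verify no $a_i$ (nor $\rho$) induces a shortcut, i.e.\ that from $S_1$ the only way to reach a singleton has $a_1a_2\cdots a_b$ as a prefix; (3) read off the length bound from the asymptotics of the central binomial coefficient, exactly as in Proposition~\ref{csubscl}; (4) prove the transitivity statements via the orbit-intersection argument (for claim~1, "up to two sink states"; for claim~2, full transitivity via $\rho$).

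The main obstacle I anticipate is step~(2) in the transitive case: ensuring that the rotation letter $\rho$ and the cycle-dumping behaviour of the $a_i$'s do not jointly produce a shorter synchronizing route — for instance, a word that first rotates, then exploits the completion maps to merge subsets faster than the lexicographic walk allows. Controlling this is what forces the construction to sacrifice half the states to the cycle and is the reason the transitive bound degrades from $2^n$ to $2^{n/2}$; making that trade-off precise, and proving the absence of shortcuts rigorously rather than by inspection of a picture, is the delicate part.
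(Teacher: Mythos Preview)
Your treatment of the first claim is essentially the paper's argument: take the transitive PFA of Proposition~\ref{csubscl} and complete the undefined transitions to two fresh sink states (this is Lemma~1 of Vorel~\cite{arXiv:1403.3972}, which the paper simply cites). That part is fine.

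For the second claim you stumble on an arithmetic slip that derails the whole discussion. With $k=\lfloor n/2\rfloor$ states and subsets of size $\lfloor k/2\rfloor$, the number of such subsets is $\binom{k}{\lfloor k/2\rfloor}=\Theta(2^{k}/\sqrt{k})$, not $\Theta(2^{k/2}/\sqrt{k})$. So your ``too weak'' first attempt already gives $\Omega(2^{n/2}/\sqrt{n})$, exactly the target; the subsequent paragraph about keeping the walk on all $n$ states while also committing $n/2$ states to a cycle is both internally inconsistent (the state budgets don't add up) and unnecessary.

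Even with the arithmetic fixed, your rotation-cycle idea is not the paper's route and has a real gap at the point you yourself flag: you have not shown that the rotation letter $\rho$ together with the off-$S_i$ dumping creates no shortcut. The paper instead invokes Lemma~4 of~\cite{arXiv:1403.3972}, whose mechanism is a \emph{swap doubling}: build the claim-1 DFA on $m\approx n/2$ states, adjoin a barred copy $\bar q$ of every state $q$, let every old letter act swap-equivariantly, and add one involution letter exchanging $q\leftrightarrow\bar q$. This makes the automaton transitive (the former sinks are swapped with non-sinks), while synchronization of the start subset is unaffected because no word can merge $q$ with $\bar q$. The no-shortcut argument is then a one-line parity observation rather than the delicate case analysis your cycle construction would require.
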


\begin{proof}
The first claim follows from proposition \ref{csubscl} above and 
Lemma 1 of \cite{arXiv:1403.3972}. The second claim follows from the first
claim and Lemma 4 of \cite{arXiv:1403.3972}. 
\end{proof}

Since $1.4142^n > 7.9995^{n/6} > 3^{n/6}$, the last result in the 
above proposition improves the third result in Proposition 1 of
\cite{arXiv:1403.3972}. In Theorem 16 of \cite{arXiv:1801.10436}, we improve 
the second results of Theorem 1 and Proposition 1 of \cite{arXiv:1403.3972}:
$$
\Omega\big(2^{n/3}/(n\sqrt{n})\big) = \Omega(1.9999^{n/3}) = \Omega(1.2599^n) 
$$
The first results of Theorem 1 and Proposition 1 of \cite{arXiv:1403.3972}
will be improved by theorem \ref{subsc} and theorem \ref{sub} below,
respectively.

\subsection{Binary PFAs}

We adapt some techniques of \cite{arXiv:1801.10436} to construct 
binary PFAs with large subset synchronization.

\begin{theorem} \label{csubsc}
There exists a transitive PFA $M$ with $n$ states and only $2$ symbols, 
which has a subset which synchronizes carefully in 
$$
\Omega \big((3-\epsilon)^{n/3}\big) = \Omega(1.4422^n)
$$
steps.
\end{theorem}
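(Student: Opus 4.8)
The plan is to build a binary PFA by encoding the arbitrary-alphabet construction of Proposition~\ref{csubscl} into two symbols, using the same idea that underlies the constructions in \cite{arXiv:1801.10436}: replace each of the many "move" symbols $a_i$ by short words over a fixed binary alphabet. Concretely, I would organize roughly $n/3$ of the states into a "counter" block whose reachable configurations are indexed by the subsets of size $\lfloor (n/3)/2 \rfloor$ of that block — by the same central binomial coefficient estimate used before, there are $\Theta\big(2^{n/3}/\sqrt{n}\big)$ such configurations, and $2^{1/3} \approx 1.2599$, so to reach the claimed $3^{1/3} \approx 1.4422$ I instead want the binary gadget to cycle through a number of configurations that grows like $3^{n/3}$. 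So the counter block should have size about $n/3$ but be arranged (as in the ternary-to-binary simulations of \cite{arXiv:1801.10436}) so that a group of $3$ states contributes a factor close to $3$ per block rather than a factor $2$; the remaining $2n/3$ states serve as the "work tape" on which the two binary letters implement increment and carry.

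The key steps, in order, are: (i) fix the block structure — partition $Q$ into $\lceil n/3 \rceil$ gadgets, each gadget being a small PFA on $3$ states whose "undefined" transitions force careful synchronization to go through every gadget-state in turn; (ii) define the two symbols $a,b$ so that on each gadget they behave like the two generators of a $3$-cycle-with-partiality, and so that the only carefully-synchronizing words are exactly the words that drive the whole "odometer" from its all-zero configuration to a fixed singleton, which has length $\Theta(3^{n/3})$ up to the $(3-\epsilon)^{n/3}$ slack; (iii) verify carefulness, i.e. that no proper prefix of the synchronizing word ever hits $\bot$ on the current subset, which is exactly where the partial transitions are designed to block all "shortcuts"; (iv) add a single reset transition (on one of the two letters, acting on the terminal singleton) that sends the end configuration back to the start configuration, so that the shortest carefully synchronizing word is unchanged in length but the automaton becomes strongly connected; (v) check transitivity directly, as in Proposition~\ref{csubscl}, by arguing the orbit of any state meets every gadget and hence is all of $Q$.

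The main obstacle I expect is step (ii) together with step (iii): making a genuinely binary alphabet simulate the unboundedly-many symbols $a_i$ while keeping the bookkeeping tight enough that (a) no accidental shortcut collapses the subset early and (b) the length is $\Omega\big((3-\epsilon)^{n/3}\big)$ rather than merely $\Omega(2^{n/3})$. Getting the base-$3$ behaviour out of two letters is the whole point of the "$3-\epsilon$" in the statement — the $\epsilon$ is the price of the binary encoding overhead, and pinning down that the overhead is only sub-exponential (a constant factor in the exponent base, tending to $0$) is the delicate estimate. The transitivity fix (step iv) and the orbit argument (step v) are routine once the core gadget is in place, and carefulness (step iii) is a matter of checking that every non-synchronizing action on an intermediate subset produces $\bot$ by construction, which is built into how the partial transitions of each $3$-state gadget are chosen.
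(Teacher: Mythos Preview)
Your proposal is not a proof but a sketch that conflates two incompatible plans and leaves the actual work undone. You begin by saying you will encode the arbitrary-alphabet construction of Proposition~\ref{csubscl} into two symbols, then immediately observe that this only gives base~$2$ and pivot to an unspecified ``odometer'' of $3$-state gadgets. Neither plan is carried out: you never define the two symbols on any concrete state set, and the entire difficulty --- getting a genuine factor of~$3$ per three-state group out of a two-letter alphabet while maintaining carefulness --- is deferred to the phrase ``this is the delicate estimate.'' That step is the whole theorem; naming it is not proving it.

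The paper's proof is quite different from both of your sketches. It does not start from Proposition~\ref{csubscl} at all. Instead it adapts a specific three-symbol construction from \cite{arXiv:1801.10436}: the state set is $\{A_i,X_i,B_i \mid i\in\Z/k\Z\}$ with $n=3k$, the three symbols are a cyclic shift $c$, a ``ratchet'' $r$ (which is where the partiality lives and which forces $\Omega((3-\epsilon)^{n/3})$ applications before the subset can shrink), and a finishing symbol $f$ that replaces the start symbol of the original construction. The subset one begins with is $\{A_{h+1},\ldots,A_k\}$, and the lower bound on $r$-steps is inherited from the analysis in \cite{arXiv:1801.10436}. The reduction from three symbols to two is not an ad~hoc encoding but an application of Theorem~9 of \cite{arXiv:1801.10436} to the alphabet $\{f,c,rc\}$; the paper notes that condition~(1) of that theorem fails for $f$ but that this is harmless because we are free to choose the starting subset. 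Transitivity is already built into the cyclic structure. So the two ingredients you are missing are (a) the concrete three-symbol PFA whose structure delivers base~$3$, and (b) the off-the-shelf symbol-reduction theorem that converts it to binary without losing the exponent.
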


\begin{proof}
We first construct a PFA $M$ with $n = 3k$ states and $3$ symbols.
The state set is $\{A_i,X_i,B_i \mid i \in \Z/(k\Z)\}$. We leave out the states
$C_i$ and the starting symbol $s$. Instead, we add a finishing symbol $f$.
\begin{align*}
(A_i c,X_i c,B_i c) &= (A_{i+1},X_{i+1},B_{i+1}) \\
(A_i r,X_i r,B_i r) &= \begin{cases}
(\bot,\bot,A_i), & \text{if $i = 1, 2, \ldots, h$} \\
(X_i,B_i,\bot), & \text{if $i = h+1$} \\
(A_i,X_i,B_i), & \text{if $i = h+2, h+3, \ldots, k$}
\end{cases} \\
(A_i f,X_i f,B_i f) &= \begin{cases}
(A_i,X_i,B_k), & \text{if $i = 1$} \\
(\bot,\bot,\bot), & \text{if $i = 2$} \\
(A_i,X_i,B_i), & \text{if $i = 3, 4, \ldots, k$}
\end{cases}
\end{align*}
The subset we start with is $\{A_{h+1},A_{h+2},\ldots,A_k\}$, so $k - h$ 
groups are represented with a state. Using techniques of 
\cite{arXiv:1801.10436}, we can prove that it takes 
$\Omega \big((3-\epsilon)^{n/3}\big)$ $r$-steps to reduce to 
$k-h-1$ such groups.

Using Theorem 9 of \cite{arXiv:1801.10436}, we reduce $\{f, c, rc\}$ 
to only $2$ symbols. Condition (1) of this theorem is not fulfilled
for $s = f$, but this is not a problem because we may choose the
subset we start with. By removing up to two states from $Q'$, we can
obtain binary automata for every number of states.
\end{proof}

\subsection{Binary DFAs}

We combine theorem \ref{csubsc} with techniques of \cite{arXiv:1403.3972} 
to construct binary DFAs with exponential subset synchronization.

\begin{theorem} \label{sub}
There exists a DFA $M$ with $n$ states and only $2$ symbols, 
which has a subset which synchronizes carefully in 
$$
\Omega \big((3-\epsilon)^{n/3}\big) = \Omega(1.4422^n)
$$
steps. Furthermore, the DFA is transitive up to $2$ sink states.
\end{theorem}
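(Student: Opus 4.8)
The plan is to take the ternary PFA $M$ from the proof of Theorem~\ref{csubsc} and make it total rather than partial, paying for the $\bot$-entries with a small number of extra states, two of which will be genuine sinks. Recall that in Theorem~\ref{csubsc} the only partial transitions come from $r$ (the entries $A_i r = X_i r = \bot$ for $i \le h$, and $B_{h+1} r = \bot$) and from $f$ (the entries in row $i = 2$). The standard way to totalize a carefully synchronizing PFA without changing its shortest carefully synchronizing word is to route every previously-undefined transition into a fresh sink state $z_1$, and to send $z_1$ to a second fresh sink $z_2$ (with $z_2$ fixed by both symbols), so that once the orbit of the start subset picks up $z_1$ it can never again collapse to a singleton; hence any synchronizing word must avoid all the formerly-$\bot$ transitions, i.e.\ it must be exactly a careful synchronizing word of the original PFA. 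This is precisely the device behind Lemma~1 of \cite{arXiv:1403.3972}, which is already invoked in Proposition~3.2, and it costs only $O(1)$ states: two sinks suffice because each symbol needs only one ``trash'' target.

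The steps, in order, are: (i) start from the $3k$-state ternary PFA of Theorem~\ref{csubsc} and adjoin states $z_1, z_2$; redefine $A_i r = X_i r = z_1$ for $i \le h$, $B_{h+1} r = z_1$, the whole row $i = 2$ of $f$ to $z_1$, set $z_1 r = z_1 f = z_1 c = z_2$ and $z_2$ fixed by all symbols; leave $c$ untouched (it was already total). (ii) Verify that the start subset $\{A_{h+1}, \ldots, A_k\}$ still carefully synchronizes in $\Omega((3-\epsilon)^{n/3})$ steps in the DFA sense: any synchronizing word that uses a formerly-$\bot$ transition pulls $z_1$ (then $z_2$) into the image set, which then contains the never-removable state $z_2$ and also $z_2 \ne$ everything else, so it cannot become a singleton; therefore synchronizing words of the DFA are in bijection with careful synchronizing words of the PFA, and the lower bound transfers verbatim. (iii) Apply Theorem~9 of \cite{arXiv:1801.10436} exactly as in the proof of Theorem~\ref{csubsc} to reduce the alphabet $\{f, c, rc\}$ (now with $z_1, z_2$ behaving consistently under each of these composite symbols) to $2$ symbols, and delete up to two states from $Q'$ to hit every target number of states. (iv) Observe that the resulting automaton is transitive except for the two sinks $z_1, z_2$: transitivity of the PFA part is inherited (the argument of Theorem~\ref{csubsc} via Theorem~9 of \cite{arXiv:1801.10436} gives a single strongly connected component on the $3k$ ``real'' states, up to the two removed states), and $z_1, z_2$ are by construction the only states from which one cannot return — giving ``transitive up to $2$ sink states.''

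The main obstacle I expect is bookkeeping step (iii): Theorem~9 of \cite{arXiv:1801.10436} has hypotheses (the excerpt notes condition~(1) already fails for $s = f$ in the original construction, which was finessed by freedom in choosing the start subset), and one must check that adding the two sinks and routing the composite symbols $rc$, $f$, $c$ through them does not break whatever the remaining hypotheses of that theorem are — in particular that $z_1, z_2$ can be consistently absorbed into the $2$-symbol encoding without creating a spurious shortcut that collapses the start subset faster. A secondary nuisance is confirming that when we ``remove up to two states from $Q'$'' to adjust the state count, we never remove $z_1$ or $z_2$ (else we lose the careful-synchronization-forcing property) and never disconnect the real component; both are easily arranged but must be stated. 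Everything else — the totalization, the bijection of synchronizing words, the transfer of the $\Omega((3-\epsilon)^{n/3})$ bound — is routine given Theorem~\ref{csubsc} and Lemma~1 of \cite{arXiv:1403.3972}.
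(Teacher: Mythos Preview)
Your overall strategy---totalize the ternary PFA of Theorem~\ref{csubsc} by adding two extra states and then reduce to two symbols---is the paper's approach. But your specific totalization is broken. You claim that once the image set picks up $z_1$ it ``can never again collapse to a singleton''; this is false, because it can collapse to $\{z_2\}$. Nothing prevents different states of the start subset from hitting formerly-$\bot$ transitions at \emph{different} times and eventually all accumulating in $z_2$. Concretely, in your three-symbol DFA the word $(cr)^{k-h}c$ already sends $\{A_{h+1},\ldots,A_k\}$ to $\{z_2\}$: each $c$ rotates the surviving $A$-indices so that the top one becomes $A_1$, the following $r$ sends $A_1$ to $z_1$ (a formerly-$\bot$ transition) while fixing every remaining $A_i$ with $i\ge h+2$, and the accumulated $z_1$'s flow on to $z_2$. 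This synchronizes the start subset in $2(k-h)+1=O(k)$ steps, destroying the exponential lower bound. So the asserted bijection between DFA-synchronizing words and careful synchronizing words of the PFA fails outright, and the appeal to Lemma~1 of \cite{arXiv:1403.3972} does not rescue it.

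The paper closes this gap with a different totalization. It adds two \emph{genuine} sinks $D$ and $\bar D$ (each fixed by every symbol), routes all $\bot$ to $\bar D$, and puts $D$ into the start subset $\{A_{h+1},\ldots,A_k,D\}$. Since the sink $D$ is present from the outset, any synchronizing word must end at $\{D\}$; using any formerly-$\bot$ transition introduces $\bar D$, which then stays forever alongside $D$, making synchronization impossible. This really does force every DFA-synchronizing word for the new start subset to be careful on the original states. The price is that one must also make $\{D\}$ reachable from the original states without passing through $\bar D$: the paper redirects the single transition $B_1 f$ from $B_k$ to $D$, so that the PFA's final merge becomes a merge into $D$. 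Your proposal has no analogue of this redirection step, and your $z_1$ is not even a sink, so the resulting automaton is not literally ``transitive up to two sink states'' either.
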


\begin{proof}
We first construct a PFA $M$ with $n = 3k + 2$ states and $3$ symbols.
The state set is $\{A_i,X_i,B_i \mid i \in \Z/(k\Z)\} \cup \{D,\bar{D}\}$. 
With two exceptions, we take the definitions in the proof of theorem 
\ref{csubsc} of $c$, $r$, $f$ for all $i \in \Z/(k\Z)$, and extend them 
by imposing that $D$ and $\bar{D}$ are sink states. 

The first exception is that we replace $\bot$ by $\bar{D}$ in all definitions.
The second exception is that we replace $B_1 f = B_k$ by $B_1 f = D$. Just as 
above, the reduction to only $2$ symbols can be done without affecting the 
estimate. The subset we start with is $\{A_{h+1},A_{h+2},\ldots,A_k,D\}$.
\end{proof}

\begin{theorem} \label{subsc}
There exists a transitive DFA $M$ with $n$ states and only $2$ symbols, 
which has a subset which synchronizes in 
$$
\Omega \big((3-\epsilon)^{n/6}\big) = \Omega(1.2009^n)
$$
\end{theorem}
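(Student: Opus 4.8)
The plan is to combine the binary DFA construction of theorem \ref{sub} with the transitivity-restoring trick of Lemma 4 of \cite{arXiv:1403.3972}, which was already used to pass from the first to the second claim of the earlier proposition. Recall that theorem \ref{sub} produces a binary DFA on $n = 3k+2$ states with a subset synchronizing in $\Omega\big((3-\epsilon)^{n/3}\big)$ steps, but this DFA is only transitive up to the two sink states $D$ and $\bar D$; the sink states are the obstruction to transitivity, since nothing escapes them. The idea of Lemma 4 of \cite{arXiv:1403.3972} is to take two disjoint copies of such an automaton and wire them together so that the sink states of one copy feed back into the live part of the other copy, making the combined automaton transitive while preserving (a large fraction of) the synchronization length. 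Doubling the state count is exactly what turns the exponent $n/3$ into $n/6$, matching the claimed $\Omega\big((3-\epsilon)^{n/6}\big) = \Omega(1.2009^n)$ bound.

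Concretely, first I would recall the precise statement of Lemma 4 of \cite{arXiv:1403.3972}: given a binary DFA with $m$ states that is transitive up to $t$ sink states and has a subset synchronizing in $L$ steps, one obtains a transitive binary DFA with roughly $2m$ states and a subset synchronizing in $\Omega(L)$ steps. Then I would feed in the DFA $M$ from theorem \ref{sub}: it has $n_0 = 3k+2$ states, $t = 2$ sink states, two symbols, and a subset synchronizing carefully in $L = \Omega\big((3-\epsilon)^{n_0/3}\big)$ steps. Applying the lemma gives a transitive binary DFA on $n \approx 2n_0 = 6k + O(1)$ states with a subset synchronizing in $\Omega(L) = \Omega\big((3-\epsilon)^{n_0/3}\big) = \Omega\big((3-\epsilon)^{n/6}\big)$ steps, which is $\Omega(1.2009^n)$ since $(3-\epsilon)^{1/6} \to 3^{1/6} \approx 1.2009$ as $\epsilon \to 0$. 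To handle all residues of $n$ modulo $6$, I would remove up to a bounded number of states from the live part, exactly as in the final sentence of the proof of theorem \ref{csubsc}, which changes the constant but not the exponential rate.

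The main technical point to verify — and the step I expect to be the main obstacle — is that the synchronizing word surviving the Lemma 4 construction genuinely retains length $\Omega(L)$ rather than collapsing because the two glued copies can be synchronized "in parallel" by a shorter common word. This is why one must be careful about which subset to start from: the starting subset should lie entirely in one copy (plus, if needed, the sink-state images that route into the other copy), so that synchronizing it forces the long traversal in that copy before anything short-circuits. I would argue, as in \cite{arXiv:1403.3972}, that any synchronizing word for the chosen subset must first drive the first copy through its full slow synchronization (since the sink states are the only exits and they only appear after $\Omega(L)$ steps), so the lower bound is inherited. The remaining bookkeeping — checking that the merged automaton is still deterministic, complete, binary, and strongly connected — is routine and can be left to the reader, exactly in the spirit of the analogous earlier proofs in this section.
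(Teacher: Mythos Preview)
Your approach is correct and follows the same high-level idea as the paper---double the automaton to eliminate the two sink states and restore transitivity---but the execution differs. You invoke Lemma~4 of \cite{arXiv:1403.3972} as a black box applied to the finished binary DFA of theorem~\ref{sub}, exactly as the earlier proposition did for the $\Omega(2^{n/2}/\sqrt{n})$ bound. The paper instead unfolds the doubling explicitly at the level of the three-symbol PFA: it introduces barred copies $\bar{A}_i,\bar{X}_i,\bar{B}_i,\bar{D}$ via a swap congruence, adds two further states $E,\bar{E}$, and rewires the finishing symbol $f$ (now sending group $h{+}1$ to $\bar{E}$ as well, and routing $D \to E \to B_k$) so that any premature use of $f$ drags both $E$ and $\bar{E}$ into the subset and kills synchronization. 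Only afterwards is the alphabet reduced to two symbols.

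What each buys: your route is shorter and reuses machinery already cited in the paper, but it leaves the ``no short-circuit through the other copy'' step to Vorel's lemma rather than exhibiting it. The paper's explicit construction is self-contained and makes visible the mechanism (the swap pair $E,\bar{E}$ acting as a trap) that enforces the $\Omega(L)$ lower bound; it also shows concretely how the extra states and the modified $f$ interact with the $r,c$ dynamics from theorem~\ref{csubsc}, which is not entirely routine. Either argument suffices for the stated asymptotic.
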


\begin{proof}
We first construct a PFA $M$ with $n = 6k + 4$ states and $3$ symbols. 
The state set is 
$$
\{A_i,\bar{A}_i,X_i,\bar{X}_i,B_i,\bar{B}_i \mid i \in \Z/(k\Z)\} 
\cup \{D,\bar{D},E,\bar{E}\}
$$
By way of swap congruence, we define how symbols act on states with bars above 
them. So we only need to describe how symbols act on states without bars above 
them. Symbols $c$ and $r$ are now defined by $E c = E$, $Er = E$,
and their definitions in the proof of theorem \ref{sub}. We define
symbol $f$ as follows
\begin{align*}
Df &= E \qquad \qquad Ef = B_k \\
(A_i f,X_i f,B_i f) &= \begin{cases}
(\bar{E},\bar{E},D), & \text{if $i = 1$} \\
(\bar{E},\bar{E},\bar{E}), & \text{if $i = 2$ or $i = h+1$} \\
(A_i,X_i,B_i), & \text{if $i = 3, 4, \ldots, h, h+2, h+3, \ldots, k$}
\end{cases}
\end{align*}
The subset we start with is $\{A_{h+1},A_{h+2},\ldots,A_k,D\}$ or 
$\{A_{h+1},A_{h+2},\ldots,A_k,D,\allowbreak E\}$. 
Getting both a state and its corresponding swap state in the subset 
makes synchronization impossible. Just as in theorem \ref{csubsc},
$f$ can only be applied near the end, because otherwise we get both 
$E$ and $\bar{E}$ in our subset. Just as before, the reduction to 
only $2$ symbols can be done without affecting the estimate. 
\end{proof}

\section{D3-directing NFAs}
\label{d3nfa2pfa}

We say that an NFA $\A = (Q,\Sigma,\cdot)$ is D3-directing, if there exists a 
word $w \in \Sigma^{*}$, such that $\bigcap_{q \in Q} qw \ne \varnothing$. 
We denote the length of the shortest such $w$ by $d_3(\A)$.

For symbols $a, b$ of an NFA $\A = (Q,\Sigma,\cdot)$, we say that $a = b$ if 
$qa = qb$ for all $q$, and $a \le b$ if $qa \subseteq qb$ for all 
$q \in Q$. Furthermore, we say that $a < b$ if $a \le b$ and $a \ne b$.

We say that a symbol $a$ of an NFA $\A = (Q,\Sigma,\cdot)$ is a 
\emph{PFA-symbol}, if $|qa| \le 1$ for all $q \in Q$, and define
$$
\splt(\A) := \big(Q,\{\mbox{$x$ is a PFA-symbol} \mid 
\mbox{there is an $y \in \Sigma$ such that $x \le y$}\},\cdot\big)
$$
Notice that $\splt(\A)$ is defined as an NFA which is actually a 
PFA. For an NFA $\A = (Q,\Sigma,\cdot)$, we define 
$$
\bsc(\A) := \big(Q,\{x \in \Sigma \mid \mbox{$x \nleq \id_Q$ and 
$x \nless y$ for all $y \in \Sigma$}\},\cdot\big)
$$

\begin{lemma}
$d_3(\B) = d_3(\bsc(\B))$.
\end{lemma}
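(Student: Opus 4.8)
The plan is to prove the two inequalities $d_3(\B)\le d_3(\bsc(\B))$ and $d_3(\bsc(\B))\le d_3(\B)$ separately. The first is immediate: the alphabet of $\bsc(\B)$ is a subset of $\Sigma$ and its transition function is inherited unchanged, so any D3-directing word of $\bsc(\B)$ is, letter for letter, a D3-directing word of $\B$. In particular $\B$ is D3-directing whenever $\bsc(\B)$ is, and the reverse implication will fall out of the argument below; so we may assume throughout that $\B$ is D3-directing, since otherwise neither side of the claimed equality is finite and there is nothing to prove.

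For the reverse inequality I would take a shortest D3-directing word $w = x_1 x_2 \cdots x_l$ of $\B$, so $l = d_3(\B)$, and transform it into a D3-directing word of $\bsc(\B)$ of length at most $l$. The two elementary ingredients I would record first are monotonicity facts, both one-line consequences of $S\subseteq T \Rightarrow Sv\subseteq Tv$ for NFAs together with the observations that $a\le b$ gives $Sa\subseteq Sb$ for every $S$, and $a\le\id_Q$ gives $Sa\subseteq S$ for every $S$: (i) replacing a letter $a$ occurring in $w$ by a letter $b$ with $a\le b$ can only enlarge $qw$ for every state $q$, hence can only enlarge $\bigcap_{q\in Q}qw$, so D3-directingness is preserved and the length is unchanged; (ii) deleting a letter $a$ of $w$ with $a\le\id_Q$ likewise can only enlarge $qw$ for every $q$, so D3-directingness is preserved while the length strictly decreases.

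Now fact (ii) together with the minimality of $w$ forces that no letter $x_i$ satisfies $x_i\le\id_Q$. Since $Q$ is finite there are only finitely many distinct symbol-actions, so $\le$ restricted to the actions realised by $\Sigma$ is a finite poset; for each $i$ I would pick a symbol $y_i\in\Sigma$ whose action is maximal in that poset and satisfies $x_i\le y_i$. Then $y_i\nless z$ for all $z\in\Sigma$ by maximality, and $y_i\nleq\id_Q$ (otherwise $x_i\le y_i\le\id_Q$, contradicting the previous sentence), so $y_i$ lies in the alphabet of $\bsc(\B)$. Applying fact (i) $l$ times, the word $w' := y_1 y_2\cdots y_l$ satisfies $qw\subseteq qw'$ for every $q$, hence $\bigcap_{q\in Q}qw' \supseteq \bigcap_{q\in Q}qw \ne\varnothing$; thus $w'$ is a D3-directing word of $\bsc(\B)$ of length $l$, giving $d_3(\bsc(\B))\le l = d_3(\B)$. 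Combining the two inequalities proves the lemma.

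The two monotonicity facts are routine. The point that needs genuine care, and what I would flag as the \emph{main obstacle}, is the ordering of the two reductions: a priori a shortest D3-directing word of $\B$ could contain a letter $a$ with $a\le\id_Q$, and such an $a$ need not have any symbol of the $\bsc$-alphabet above it, so one really must first invoke minimality through fact (ii) to eliminate such letters before the "raise every letter to a maximal symbol above it" step applies cleanly. It is also worth checking the degenerate cases $|Q|=1$ (where $d_3 = 0$ on both sides, the empty word working for each) and the non-D3-directing case, but these are immediate.
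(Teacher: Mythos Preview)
Your proof is correct and follows essentially the same approach as the paper: both establish $d_3(\B)\le d_3(\bsc(\B))$ from the alphabet inclusion, and both prove the reverse by deleting letters $x\le\id_Q$ and raising the remaining ones along $\le$ to symbols of $\bsc(\B)$. The only organizational difference is that the paper phrases this as an iterative rewriting procedure (repeatedly delete a sub-identity letter or replace $x$ by some $y>x$, arguing this must terminate), whereas you invoke minimality of $w$ once to rule out sub-identity letters and then jump directly to $\le$-maximal symbols in a single pass; this sidesteps the termination argument but is otherwise the same idea.
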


\begin{proof}
Since the symbols of $\bsc(\B)$ are a subset of those of $\B$, 
$d_3(\B) \le d_3(\bsc(\B))$ follows.

Suppose that $w$ is a D3-directing word of $\B$.
If all letters of $w$ are symbols of $\bsc(\B)$, then we are done,
so assume that $w$ has a symbol $x$ which is not a symbol of $\bsc(\B)$.
\begin{compactitem}

\item If $x \le \id_Q$, then we can remove all occurences of $x$ in $w$.

\item If $x < y$ for some symbol $y$ of $\B$, then we can replace
all occurences of $x$ by $y$ in $w$.

\end{compactitem}
We cannot repeat the above forever, so we have a procedure to 
change $w$ into a D3-directing word with only symbols of $\bsc(\B)$,
which is not longer than $w$. Hence $d_3(\B) \ge d_3(\bsc(\B))$.
\end{proof}

\begin{theorem} \label{D3}
$d_3(\A) = d_3(\splt(\A))$.
\end{theorem}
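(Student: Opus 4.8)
The plan is to show both inequalities $d_3(\A)\le d_3(\splt(\A))$ and $d_3(\A)\ge d_3(\splt(\A))$. One direction should be almost immediate and the other is the real content. For the easy direction, I would argue $d_3(\A)\le d_3(\splt(\A))$ by taking a D3-directing word $w=x_1x_2\cdots x_\ell$ of $\splt(\A)$. Each $x_j$ is by definition a PFA-symbol dominated by some symbol $y_j$ of $\A$, i.e.\ $x_j\le y_j$. Since applying a larger symbol (in the $\le$ order) can only enlarge the image of every state, a straightforward induction on word length gives $Q(x_1\cdots x_\ell)\subseteq Q(y_1\cdots y_\ell)$ pointwise: $q(x_1\cdots x_\ell)\subseteq q(y_1\cdots y_\ell)$ for every $q$. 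Hence $\bigcap_{q\in Q} q(y_1\cdots y_\ell)\supseteq \bigcap_{q\in Q} q(x_1\cdots x_\ell)\neq\varnothing$, so $w':=y_1\cdots y_\ell$ is D3-directing for $\A$ and has the same length; thus $d_3(\A)\le d_3(\splt(\A))$.

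The harder direction is $d_3(\A)\ge d_3(\splt(\A))$: I must take a shortest D3-directing word $w=x_1\cdots x_\ell$ of $\A$ and manufacture from it a D3-directing word of $\splt(\A)$ of length at most $\ell$. The natural idea is to replace each letter $x_j$ by a well-chosen PFA-symbol $\hat x_j\le x_j$, so that the images shrink but the global intersection stays nonempty. Fix a state $p\in\bigcap_{q\in Q} qw$. The key observation is a ``witness path'' argument: for each $q\in Q$, since $p\in q(x_1\cdots x_\ell)$, there is a sequence of states $q=q^{(0)}, q^{(1)},\dots,q^{(\ell)}=p$ with $q^{(j)}\in q^{(j-1)}x_j$ for each $j$. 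Now I want to select, for each position $j$, a single PFA-symbol $\hat x_j$ with $\hat x_j\le x_j$ that simultaneously respects all these witness edges, i.e.\ $q^{(j-1)}\hat x_j=\{q^{(j)}\}$ for every witness path. The obstruction is that a fixed state might appear as $q^{(j-1)}$ for two different $q$'s but be routed to different successors $q^{(j)}\neq \bar q^{(j)}$ in the two witnesses, which would force $\hat x_j$ to send one state to two places — impossible for a PFA-symbol.

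To get around that, I would build the witness paths \emph{coherently}, processing positions from $\ell$ down to $1$ and maintaining a partial function at each step. Concretely: at position $\ell$, every state must reach $p$, and since $p\in r x_\ell$ for all $r$ in the relevant range, I can pick for each such $r$ some $r'\in rx_\ell$ with... actually the clean way is backward. Define $T_\ell=\{p\}$. Given $T_j\subseteq Q$ nonempty, I need $T_{j-1}$ and a choice, for each $r\in Q$ with $rx_j\cap T_j\neq\varnothing$, of a single element of $rx_j\cap T_j$; let $\hat x_j$ be the PFA-symbol defined by $r\hat x_j=$ that chosen element (and $\bot$ where $rx_j\cap T_j=\varnothing$ or $rx_j=\varnothing$), and set $T_{j-1}:=\operatorname{im}(\hat x_j)$... but this does not obviously keep $Q$ synchronizing onto $p$. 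The correct invariant is: every $q\in Q$ satisfies $q(x_1\cdots x_{j-1})\cap$ (domain considerations) — this is getting delicate, so let me instead state the plan at the level that makes it work: I define $\hat x_j$ so that $\operatorname{Dom}(\hat x_j)\supseteq Q(x_1\cdots x_{j-1})$ whenever $q(x_1\cdots x_{j-1})$ matters, and so that $Q\hat x_1\hat x_2\cdots\hat x_j = \{p\}$-reaching stays intact.

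The cleanest formulation, and the one I would actually write up: proceed by induction on $\ell$, peeling off the \emph{last} letter. Let $S=Qx_1\cdots x_{\ell-1}$ (the set of states reached just before the final letter; as a set-valued image). We know $p\in sx_\ell$ for every $s\in S$. Choose a PFA-symbol $\hat x_\ell\le x_\ell$ with $s\hat x_\ell=p$ for all $s\in S$ (possible because on $S$ we only need the single value $p$, and off $S$ we may leave it $\bot$ or arbitrary as long as $\hat x_\ell\le x_\ell$ — pick $\bot$). Then the prefix $x_1\cdots x_{\ell-1}$ is a D3-directing word of $\A$ \emph{for the target} $S$ in the sense that $Qx_1\cdots x_{\ell-1}\subseteq S$, hence it suffices to make $x_1\cdots x_{\ell-1}$ into PFA-letters driving $Q$ into $\operatorname{Dom}(\hat x_\ell)\cap S$ with all-states-agree; recursing, choose $\hat x_{\ell-1}\le x_{\ell-1}$ etc. Formally I would track the statement: ``if $w$ is a word of $\A$ with $Qw\subseteq S$ for some nonempty $S$, then there is a word $\hat w$ of $\splt(\A)$ with $|\hat w|=|w|$ and $Q\hat w\subseteq S$, provided additionally $\bigcap_{q} qw\neq\varnothing$'' — but actually the DFA/PFA image of $Q$ under $\hat w$ will be a \emph{singleton}, which is even stronger than D3-directing. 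I expect the main obstacle is precisely pinning down this invariant so the per-letter PFA choice is always consistent (no state forced to split); the backward/inductive peeling of the last letter, combined with the fact that we only ever need the single common value $p$ propagated, is what resolves it, because at each stage the ``target set'' we must land in is carried along and the freedom to send everything outside it to $\bot$ removes all conflicts. Finally I would note $\splt(\A)$ indeed contains all these $\hat x_j$ since each satisfies $\hat x_j\le x_j\in\Sigma$ and is a PFA-symbol, so $\hat w$ is a genuine word of $\splt(\A)$, giving $d_3(\splt(\A))\le\ell=d_3(\A)$ and completing the proof.
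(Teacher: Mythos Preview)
Your easy direction $d_3(\A)\le d_3(\splt(\A))$ is fine and matches the paper. The trouble is in the hard direction.

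Your final ``peel off the last letter'' argument has a real error. You set $S=Qx_1\cdots x_{\ell-1}$ as the full NFA image and then assert ``$p\in s x_\ell$ for every $s\in S$''. That is false in general: from $p\in q(x_1\cdots x_\ell)$ you only get that \emph{some} $s\in q(x_1\cdots x_{\ell-1})$ satisfies $p\in s x_\ell$, not all of them. So you cannot define $\hat x_\ell$ to send every element of $S$ to $p$, and the induction as stated does not go through. The invariant you wrote down, ``$Qw\subseteq S$ implies $Q\hat w\subseteq S$'', has the same defect: the hypothesis you actually have after one backward step is only $qw\cap S\neq\varnothing$ for each $q$, not $qw\subseteq S$.

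Ironically, the backward $T_j$ scheme you sketched first and then abandoned is the one that works, with one correction: you should take $T_{j-1}$ to be the \emph{domain} of $\hat x_j$, namely $T_{j-1}=\{r\in Q: r x_j\cap T_j\neq\varnothing\}$, not its image. With $T_\ell=\{p\}$ and $T_{j-1}$ defined this way, a witness path for each $q$ shows $T_0=Q$, and then $Q\hat x_1\subseteq T_1$, $Q\hat x_1\hat x_2\subseteq T_2$, and so on down to $Q\hat w\subseteq T_\ell=\{p\}$, with no $\bot$ ever encountered. That is a complete proof.

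The paper takes a different, forward route. It fixes a total order on $Q$ and, for each starting state $q$, builds a witness path $q=p_{q0},p_{q1},\ldots,p_{q\ell}=r$ where at each step $p_{q(i+1)}$ is chosen as the \emph{least} element of $p_{qi}w_{i+1}$ from which $r$ is still reachable via the remaining suffix. The total order is precisely the device that resolves the consistency obstruction you identified: if $p_{qi}=p_{q'i}$ then, having made the minimal choice, $p_{q(i+1)}=p_{q'(i+1)}$, so the map $p_{qi}\mapsto p_{q(i+1)}$ is a well-defined partial function $w'_{i+1}\le w_{i+1}$. Your (corrected) backward construction avoids the ordering trick by building the target sets $T_j$ first and only then choosing $\hat x_j$; both approaches work, but the paper's is the one actually used.
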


\begin{proof}
Adopt a total ordering on the state set $Q$ of $\A$.
Let $w = w_1 w_2 \cdots w_l$ be a word of $\A$, and suppose
that $w$ is D3-directing. Say that $r \in qw$ for all $q \in Q$. 

Take any $q \in Q$ and define $p_{q0} = q$. Assume by induction
that $p_{qi} \in q w_1 w_2 \cdots w_i$, and that 
$r \in p_{qi} w_{i+1} w_{i+2} \cdots w_l$. Then there exists
a $p_{q(i+1)} \in p_{qi}w_{i+1}$, such that 
$r \in p_{q(i+1)} w_{i+2} w_{i+3} \cdots w_l$.
Choosing $p_{q(i+1)}$ to be as small as possible with 
respect to the ordering of $Q$ yields an inductive definition of 
$p_{qi}$ for all $q \in Q$ and all $i \in \{1,2,\ldots,l\}$.
Furthermore, $p_{ql} = r$ for all $q \in Q$.

Since we used the ordering of $Q$, we have
$$
p_{qi} = p_{q'i} \Longrightarrow p_{q(i+1)} = p_{q'(i+1)}
$$
Consequently, the following is a proper definition of $w'_i$ for 
all $i \in \{1,2,\ldots,l\}$, where $P_i = \{p_{qi} \mid q \in Q\}$.
\begin{align*}
p_{qi} w'_i &= \{p_{q(i+1)}\} \quad (q \in Q) \\
q w'_i &= \varnothing \quad (q \in Q \setminus P_i)
\end{align*}
Let $w' = w'_1 w'_2 \cdots w'_l$. Then $qw' = \{p_{ql}\} = \{r\}$
for all $q \in Q$. As $w'_i$ is a PFA-symbol and $w'_i \le w_i$ 
for all $i$, we see that $w'$ is a D3-directing word of $\splt(\A)$.

So $d_3(\A) \ge d_3(\splt(\A))$. Conversely, if 
$w' = w'_1 w'_2 \cdots w'_l$  is a D3-directing word, say that 
$r \in qw'$ for all $q \in Q$, and $w'_i \le w_i$ for all $i$, then 
$r \in qw$ for all $q \in Q$, where $w = w_1 w_2 \cdots w_l$. So 
$d_3(\A) \le d_3(\splt(\A))$.
\end{proof}

Let 
\begin{align*}
d_3(n) &:= \max\{d_3(\A) \mid \mbox{$\A$ is an NFA}\} \\
cd_3(n) &:= \max\{d_3(\A) \mid \mbox{$\A$ is a complete NFA}\} \\
p(n) &:= \max\{d_3(\A) \mid \mbox{$\A$ is a PFA}\} \\
d(n) &:= \max\{d_3(\A) \mid \mbox{$\A$ is a DFA}\}
\end{align*}
For a PFA, D3-directing is the same as careful synchronization, which
is just synchronization in case of a DFA. So $p(n)$ and $d(n)$ are
just the maximum (careful) synchronization lengths for PFAs and
DFAs respectively of $n$ states.

\begin{corollary}
$d_3(n) = p(n)$ and $cd_3(n) = d(n)$.
\end{corollary}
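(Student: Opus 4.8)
The plan is to derive both identities directly from Theorem~\ref{D3} together with the preceding Lemma, using the fact that $\splt(\A)$ is always a PFA, and that it is a DFA exactly when $\A$ is complete. Concretely, for the equality $d_3(n) = p(n)$ I would argue the two inequalities separately. The inequality $p(n) \le d_3(n)$ is immediate, since every PFA is in particular an NFA. For the reverse inequality $d_3(n) \le p(n)$, take any NFA $\A$ with $n$ states achieving $d_3(\A) = d_3(n)$; by Theorem~\ref{D3}, $d_3(\A) = d_3(\splt(\A))$, and $\splt(\A)$ is a PFA on the same state set $Q$, so it has $n$ states, whence $d_3(\splt(\A)) \le p(n)$. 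Combining gives $d_3(n) \le p(n) \le d_3(n)$.

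For the equality $cd_3(n) = d(n)$ the structure is the same, but one must check that $\splt(\A)$ is a genuine DFA when $\A$ is a complete NFA, so that the witness lands in the right class. First, $d(n) \le cd_3(n)$ because every DFA is a complete NFA. For the converse, let $\A$ be a complete NFA with $n$ states realizing $cd_3(n)$. The point to verify is that $\splt(\A)$ is complete, i.e. for every state $q$ and every state $q'$ there is a PFA-symbol $x$ with $xq = \{q'\}$ and $x \le y$ for some symbol $y$ of $\A$ with $q' \in qy$ — this is exactly the single-transition restriction of an existing symbol, which is itself a PFA-symbol dominated by $y$. Hence $\splt(\A)$ has, for each $q$, some symbol defined at $q$, so $\splt(\A)$ is a PFA with no dead transitions, i.e. a DFA. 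Then Theorem~\ref{D3} gives $cd_3(n) = d_3(\A) = d_3(\splt(\A)) \le d(n)$, and the two inequalities close the argument.

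The only mildly delicate point — and the step I would expect to be the main obstacle — is this completeness check for $\splt(\A)$: one needs that for a complete NFA the splitting operation does not accidentally produce a state with no outgoing transition. This follows because for a complete NFA every state $q$ has $qa \ne \varnothing$ for some (indeed every) symbol $a$, so picking any $q' \in qa$ and forming the PFA-symbol $x$ that agrees with $a$ on the point $q$ (sending $q \mapsto q'$) and is undefined elsewhere gives $x \le a$, hence $x$ is a legal symbol of $\splt(\A)$, and $x$ is defined at $q$. So $\splt(\A)$ is complete, hence a DFA. Everything else is bookkeeping: both equalities are then one-line consequences of Theorem~\ref{D3} and the observation that $\splt$ preserves the number of states.
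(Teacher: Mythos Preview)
Your argument for $d_3(n) = p(n)$ is fine and matches the paper. The argument for $cd_3(n) = d(n)$, however, contains a genuine error: it is \emph{not} true that $\splt(\A)$ is a DFA when $\A$ is complete. Recall that the alphabet of $\splt(\A)$ consists of \emph{all} PFA-symbols $x$ dominated by some symbol of $\A$. In particular, the very symbol you construct --- the one sending $q \mapsto q'$ and undefined at every other state --- is itself a symbol of $\splt(\A)$, and it is undefined at all states but one. So $\splt(\A)$ is a PFA with plenty of partial symbols, not a DFA. Your completeness criterion (``for each state there is some symbol defined at it'') is not what a DFA requires; a DFA requires every symbol to be defined at every state.

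The paper closes this gap by also invoking the preceding Lemma: one passes from $\splt(\A)$ to $\bsc(\splt(\A))$, which by the Lemma has the same $d_3$ value. The point is then that $\bsc$ retains only the $\le$-maximal PFA-symbols (discarding any $x$ with $x < y$), and when $\A$ is complete, a PFA-symbol $x \le y$ that is undefined at some state $q$ can always be enlarged by setting $qx$ to an element of $qy \ne \varnothing$, so the maximal ones are total. Hence $\bsc(\splt(\A))$ is a DFA, and $cd_3(n) = d_3(\A) = d_3(\splt(\A)) = d_3(\bsc(\splt(\A))) \le d(n)$. Your proof can be repaired by inserting exactly this step.
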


\begin{proof}
Clearly, $d_3(n) \ge p(n)$ and $cd_3(n) \ge d(n)$. From the above
lemma and theorem, it follows that it suffices to show that 
$\bsc(\splt(\A))$ is actually a DFA for every complete NFA $\A$. 
This is straightforward.
\end{proof}

In \cite{arXiv:1703.07995}, in which $cd_3(n) = d(n)$ is proved as well, there 
is a description of $\splt$ which is more restrictive in choosing symbols 
in the following sense: a PFA symbol $x \le y$ is only selected if for all 
states $q$, $q x = \varnothing$ if and only if $q y = \varnothing$. 
Due to this, their $\splt$ maps complete NFAs directly to DFAs. 
But their description of $\splt$ is kind of algorithmic and takes 
pages. This is not preferable compared to an algebraic description 
of only one line. 

Lemma 3 of \cite{d3nfa} implies that $d_3(n) = p(n)$. Propositions 2 and 10 in 
\cite{d3nfa} however indicate that $cd_3(n) = d(n)$ was missed by the authors.
The proof of Lemma 3 in \cite{d3nfa} follows that of theorem \ref{D3} above 
more or less, except for adopting a total ordering on the state set,
which is $Q$ in theorem \ref{D3} above and $S$ in Lemma 3 in \cite{d3nfa}.
Using a total ordering on $S$ is needed to make that the 
partial functions $\rho_1, \rho_2, \ldots, \rho_r$ in the proof
of Lemma 3 of \cite{d3nfa} are properly defined, i.e.\@ do not depend 
on the state $s \in S$.

Let $\A$, $\C$ be NFAs. We say that $\C \le \A$ if for every symbol 
$x$ of $\C$, there is a symbol $y$ of $\A$ such that $x \le y$. 
We say that $\C < \A$ if $\C \le \A$ and $\A \nleq \C$. 
Notice that 
$$
\C \le \A \Longrightarrow d_3(\A) \ge d_3(\C)
$$
Furthermore,
$$
\splt(\A) \le \A \qquad \mbox{and} \qquad \bsc(\A) \le \A 
$$
and $\splt(\A) < \A$, if and only if $\A$ is not actually a PFA. 

Let $\B$ be an NFA, with alphabet $\Sigma$. Let $P$ be a partition
of $\Sigma$ into $p$ subsets. Then we can merge the symbols of each
of the subsets of $P$, to obtain an NFA $\C$ with at most $p$ symbols. 
We call $\C$ a {\em partitional symbolic merge of $\B$}. Notice that
$\B \le \C$. 

\begin{proposition}
Let $\A, \B$ be NFAs, such that $\B \le \A$. Then there exists a 
partitional symbolic merge $\C$ of $\B$ for which $\B \le \C \le \A$, 
such that the number of symbols of $\C$ does not exceed that of $\A$.

Furthermore, $d_3(\C) = d_3(\B) = d_3(\A)$ if $d_3(\B) \le d_3(\A)$.
\end{proposition}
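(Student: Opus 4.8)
The plan is to build the partition $P$ of the alphabet $\Sigma$ of $\B$ greedily by grouping together symbols of $\B$ according to which symbol of $\A$ dominates them. Write $\A = (Q,\Gamma,\cdot)$ and $\B = (Q,\Sigma,\cdot)$. Since $\B \le \A$, for every $x \in \Sigma$ there is at least one $y \in \Gamma$ with $x \le y$; fix a choice function $\phi : \Sigma \to \Gamma$ with $x \le \phi(x)$ for all $x$, and let $P$ be the partition of $\Sigma$ into the fibres $\phi^{-1}(y)$, $y \in \Gamma$ (discarding empty fibres). This partition has at most $|\Gamma|$ blocks, so the partitional symbolic merge $\C$ it induces has at most as many symbols as $\A$. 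For a block $\phi^{-1}(y)$, the merged symbol is $z_y$ defined by $q z_y = \bigcup_{x \in \phi^{-1}(y)} q x$, and since every such $x$ satisfies $qx \subseteq qy$, we get $q z_y \subseteq q y$, i.e.\ $z_y \le y$. Hence $\C \le \A$. On the other hand $\B \le \C$ holds for any partitional symbolic merge, as already noted in the text. This establishes the first sentence, including the symbol count.

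For the second sentence, I would use the monotonicity implication $\C' \le \C \Rightarrow d_3(\C) \ge d_3(\C')$ stated just before the proposition. From $\B \le \C \le \A$ we immediately get $d_3(\A) \ge d_3(\C) \ge d_3(\B)$. Under the hypothesis $d_3(\B) \le d_3(\A)$ — which in context means that the extreme D3-directing lengths coincide, i.e.\ we are really in the situation $d_3(\B) = d_3(\A)$ — the two outer terms are equal, so the sandwiched middle term $d_3(\C)$ equals them as well. (If one instead reads the hypothesis literally as the inequality $d_3(\B)\le d_3(\A)$, the same sandwich gives $d_3(\A)\ge d_3(\C)\ge d_3(\B)$ only; the intended reading, consistent with how the proposition is used to transfer the alphabet bound of $\A$ onto an automaton with the same $d_3$ as $\B$, is the equality, and then the conclusion is immediate.)

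The only genuine content is the verification that $z_y \le y$, and this is where I would be slightly careful about the PFA convention: in $\B$ the merge of several symbols is taken with the NFA union rule $q z_y = \bigcup_x qx$, so even if $\B$ were a PFA the merge $\C$ need not be, but that is harmless here since $d_3$ and the order $\le$ are defined for arbitrary NFAs. The containment $\bigcup_{x\in\phi^{-1}(y)} qx \subseteq qy$ is immediate from $x \le \phi(x) = y$ for each $x$ in the block, so no case analysis is needed. Thus the whole argument is: (i) pick a domination choice function $\phi$; (ii) partition by its fibres; (iii) observe the block count is at most $|\Gamma|$; (iv) observe each merged symbol is dominated by the corresponding $y$; (v) invoke monotonicity of $d_3$ twice to sandwich $d_3(\C)$. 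I expect no real obstacle; the one point worth stating explicitly in the write-up is that ``$d_3(\B)\le d_3(\A)$'' together with the always-valid $d_3(\A)\ge d_3(\B)$ coming from $\B\le\A$ pins down $d_3(\C)$ exactly.
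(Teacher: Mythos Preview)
Your construction of $\C$ via a choice function $\phi:\Sigma\to\Gamma$ and the fibre partition is exactly the paper's argument: pick for each symbol of $\B$ a dominating symbol of $\A$, and merge those symbols of $\B$ that picked the same target. The bound on the number of symbols and the chain $\B \le \C \le \A$ follow as you say.

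For the second sentence you have the monotonicity direction reversed. The displayed implication just before the proposition is misstated in the paper; the correct version is $\C \le \A \Longrightarrow d_3(\C) \ge d_3(\A)$. Indeed, if $w = x_1\cdots x_l$ is D3-directing for $\C$, replace each $x_i$ by a dominating $y_i$ in $\A$; since the NFA action is monotone in both the state set and the symbol, $\bigcap_q q(y_1\cdots y_l) \supseteq \bigcap_q q(x_1\cdots x_l) \ne \varnothing$, so $\A$ has a D3-directing word of the same length. Thus from $\B \le \C \le \A$ one gets $d_3(\B) \ge d_3(\C) \ge d_3(\A)$, which is exactly the chain the paper's own proof writes down. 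The hypothesis $d_3(\B) \le d_3(\A)$, read literally as an inequality, then squeezes this chain to equality; no reinterpretation as an a priori equality is needed. Your puzzlement about the hypothesis looking vacuous is precisely the symptom of having the sandwich oriented the wrong way.
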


\begin{proof}
For each symbol $x$ of $\B$, we choose a symbol $y$ of $\A$ such that $x \le y$.
We make $\C$ from $\B$ by merging symbols $x$ for which we chose the same symbol
$y$ of $\A$. The last claim follows from $d_3(\B) \ge d_3(\C) \ge d_3(\A)$.
\end{proof}

For $n = 2,3,4,5,6,7$, we scanned all PFAs with maximum careful synchronization 
lengths $p(n)$ which were minimal as such with respect to $\le$, and tried every 
partitional symbolic merge of it (up to straightforward pruning by way of merges 
which decrease the length of the D3-directing word). This yielded several NFAs with 
$n-1$ symbols which were D3-directing in $p(n)$ steps, but no such NFAs with fewer 
symbols. From the above proposition, we infer that NFAs with $n \le 7$ states and 
fewer than $n-1$ symbols are not D3-directing in the maximum number of 
$p(n)$ steps.

Below on the left, there is a PFA with $4$ states and $4$ symbols, of which
$a'$ is the identity symbol, which takes $p(4) = 10$ steps to synchronize
carefully. It remains D3-directing in exactly $10$ steps if the symbols 
$a$ and $a'$ are merged. This merge yields an NFA with $4$ states and $3$ 
symbols.
\begin{center}
	\begin{tikzpicture}
	\useasboundingbox (-1,-0.5) rectangle (3,2.5);
	\node[circle,draw,inner sep=0pt,minimum width=3mm] (1) at (0,2) {};
	\node[circle,draw,inner sep=0pt,minimum width=3mm] (2) at (2,2) {};
	\node[circle,draw,inner sep=0pt,minimum width=3mm] (3) at (2,0) {};
	\node[circle,draw,inner sep=0pt,minimum width=3mm] (4) at (0,0) {};
	\draw[->] (1) -- node[above,inner sep=1pt] {$a,c$} (2);
	\draw[->] (2) -- node[right,inner sep=2pt] {$b$} (3);
	\draw[->] (3) -- node[below,inner sep=3pt] {$b,c$} (4);
	\draw[->] (4) -- node[left,inner sep=2pt] {$b,c$} (1);
	\draw[->] (1) edge[out=90,in=-180,looseness=10] node[left,inner sep=4pt] {$a'$} (1);
	\draw[->] (2) edge[out=90,in=0,looseness=10] node[right,inner sep=4pt] {$a,a'$} (2);
	\draw[->] (3) edge[out=-90,in=0,looseness=10] node[right,inner sep=4pt] {$a,a'$} (3);
	\draw[->] (4) edge[out=-90,in=-180,looseness=10] node[left,inner sep=4pt] {$a,a'$} (4);
	\end{tikzpicture}
	\begin{tikzpicture}
	\useasboundingbox (-0.7,-0.5) rectangle (7,2.5);
	\node[circle,draw,inner sep=0pt,minimum width=3mm] (0) at (0,1) {};
	\node[circle,draw,inner sep=0pt,minimum width=3mm] (1) at (1.732,2) {};
	\node[circle,draw,inner sep=0pt,minimum width=3mm] (2) at (1.732,0) {};
	\node[circle,draw,inner sep=0pt,minimum width=3mm] (3) at (3.464,1) {};
	\node[circle,draw,inner sep=0pt,minimum width=3mm] (4) at (5.196,1) {};
	\draw[->] (0) edge[out=15,in=-135] node[pos=0.55,below,inner sep=3pt] {$a'$} (1);
	\draw[->] (1) edge[out=-165,in=45] node[pos=0.65,above,inner sep=9pt] {$a,c,d,d'$} (0);
	\draw[->] (1) -- node[right,inner sep=2pt] {$b$} (2);
	\draw[->] (2) -- node[pos=0.55,below,inner sep=4pt] {$b$} (0);
	\draw[->] (3) -- node[pos=0.45,above,inner sep=3pt] {$b$} (1);
	\draw[->] (2) edge[out=45,in=-165] node[pos=0.45,above,inner sep=3pt] {$c$} (3);
	\draw[->] (3) edge[out=-135,in=15] node[pos=0.45,below,inner sep=4pt] {$d$} (2);
	\draw[->] (4) edge[out=165,in=15] node[above,inner sep=2pt] {$d$} (3);
	\draw[->] (3) edge[out=-15,in=-165] node[below,inner sep=3pt] {$d'$} (4);
	\draw[->] (0) edge[out=225,in=135,looseness=10] node[left,inner sep=2pt] {$a$} (0);
	\draw[->] (2) edge[out=225,in=315,looseness=10] node[pos=0.2,left,inner sep=3pt] {$a,a'$} (2);
	\draw[->] (3) edge[out=135,in=45,looseness=10] node[above,inner sep=2pt] {$a,a'$} (3);
	\draw[->] (4) edge[out=315,in=45,looseness=10] node[right,inner sep=3pt] {$a,a',b,c$} (4);
	\end{tikzpicture}
\end{center}
Above on the right, there is a PFA with $5$ states and $6$ symbols, which takes 
$p(5) = 21$ steps to synchronize carefully. It remains D3-directing in exactly 
$21$ steps if the symbols $a$ and $a'$ are merged, and the symbols $d$ and $d'$ 
are merged. These merges yield an NFA with $5$ states and $4$ symbols.
\begin{center}
	\begin{tikzpicture}
	\useasboundingbox (-1.2,-0.5) rectangle (9.1,2.5);
	\node[circle,draw,inner sep=0pt,minimum width=3mm] (0) at (0,1) {};
	\node[circle,draw,inner sep=0pt,minimum width=3mm] (1) at (2.464,1) {};
	\node[circle,draw,inner sep=0pt,minimum width=3mm] (2) at (3.464,2) {};
	\node[circle,draw,inner sep=0pt,minimum width=3mm] (3) at (3.464,0) {};
	\node[circle,draw,inner sep=0pt,minimum width=3mm] (4) at (5.196,1) {};
	\node[circle,draw,inner sep=0pt,minimum width=3mm] (5) at (6.928,1) {};
	\draw[->] (0) edge[out=45,in=-180] node[above]{$b$} (2);
	\draw[->] (1) -- node[pos=0.40,above,inner sep=2pt]{$a,b,d,e,e'$} (0);
	\draw[->] (1) -- node[pos=0.60, above,inner sep=4pt]{$c$} (3);
	\draw[->] (2) -- node[pos=0.60,above,inner sep=4pt]{$b$} (1);
	\draw[->] (3) edge[out=-180,in=-45] node[below]{$c$} (0);
	\draw[->] (3) edge[out=45,in=-165] node[pos=0.45,above,inner sep=3pt] {$d$} (4);
	\draw[->] (4) -- node[pos=0.45,above,inner sep=3pt]{$c$}(2);
	\draw[->] (4) edge[out=-135,in=15] node[pos=0.45,below,inner sep=4pt] {$e$} (3);
	\draw[->] (4) edge[out=-15,in=-165] node[below,inner sep=3pt] {$e'$} (5);
	\draw[->] (5) edge[out=165,in=15] node[above,inner sep=2pt] {$e$}(4);
	\draw[->] (0) edge[out=225,in=135,looseness=10] node[left,inner sep=2pt] {$a,a'$} (0);
	\draw[->] (2) edge[out=135,in=45,looseness=10] node[pos=0.2,left,inner sep=2pt] {$a,a'$} (2);
	\draw[->] (3) edge[out=225,in=315,looseness=10] node[pos=0.2,left,inner sep=3pt] {$a,a',b$} (3);
	\draw[->] (4) edge[out=135,in=45,looseness=10] node[above,inner sep=2pt] {$a,a',b$} (4);
	\draw[->] (5) edge[out=315,in=45,looseness=10] node[right,inner sep=3pt] {$a,a',b,c,d$} (5);
	\draw[->] (1) edge[out=200,in=290,looseness=10] node[pos=0.25,left,inner sep=1pt] {$a'$} (1);
	\end{tikzpicture}
\end{center}
Above, there is a PFA with $6$ states and $7$ symbols, of which
$a'$ is the identity symbol, which takes $p(6) = 37$ steps to synchronize
carefully. It remains D3-directing in exactly $37$ steps if the symbols 
$a$ and $a'$ are merged, and the symbols $e$ and $e'$ are merged.
These merges yield an NFA with $6$ states and $5$ symbols.
\begin{center}
	\begin{tikzpicture}[x=1.225cm,y=1.225cm]
	\useasboundingbox (-0.8,-1.4) rectangle (6.4,1.4);
	\node[circle,draw,inner sep=0pt,minimum width=3mm] (0) at (0,1) {};
	\node[circle,draw,inner sep=0pt,minimum width=3mm] (1) at (1,0) {};
	\node[circle,draw,inner sep=0pt,minimum width=3mm] (2) at (0,-1) {};
	\node[circle,draw,inner sep=0pt,minimum width=3mm] (3) at (2,1) {};
	\node[circle,draw,inner sep=0pt,minimum width=3mm] (4) at (2,-1) {};
	\node[circle,draw,inner sep=0pt,minimum width=3mm] (5) at (3,0) {};
	\node[circle,draw,inner sep=0pt,minimum width=3mm] (6) at (4.414,0) {};
	\draw[->] (3) -- node[above,inner sep=3pt]{$e,f,f'$} (0);
	\draw[->] (1) edge[in=-30,out=120] node[pos=0.2,above,inner sep=8pt]{$a,d$} (0);
	\draw[->] (0) edge[out=-60,in=150] node[pos=0.3,below,inner sep=5pt]{$a'$} (1);
	\draw[->] (2) -- node[left,inner sep=3pt]{$b,c$} (0);
	\draw[->] (1) edge[out=-150,in=60] node[pos=0.7,right,inner sep=4pt]{$b,e,f,f'$} (2);
	\draw[->] (1) edge[out=30,in=-120] node[pos=0.6,below,inner sep=2pt]{$c$} (3);
	\draw[->] (3) edge[in=60,out=-150] node[pos=0.6,above,inner sep=3pt]{$b$} (1);
	\draw[->] (3) -- node[pos=0.6,left,inner sep=2pt]{$d$} (4);
	\draw[->] (4) -- node[below]{$d$} (2);
	\draw[->] (4) edge[out=60,in=-150] node[pos=0.4,above,inner sep=3pt] {$e$} (5);
	\draw[->] (5) edge[out=150,in=-60] node[pos=0.6,below,inner sep=4pt]{$d$} (3);
	\draw[->] (5) edge[out=-120,in=30] node[pos=0.4,below,inner sep=4pt] {$f$} (4);
	\draw[->] (5) edge[out=-15,in=-165] node[below,inner sep=3pt] {$f'$} (6);
	\draw[->] (6) edge[out=165,in=15] node[above,inner sep=2pt] {$f$} (5);
	\draw[->] (0) edge[out=45,in=135,looseness=10] node[pos=0.8,left,inner sep=2pt] {$a$} (0);
	\draw[->] (3) edge[out=45,in=135,looseness=10] node[pos=0.2,right,inner sep=3pt] {$a,a'$} (3);
	\draw[->] (2) edge[out=225,in=315,looseness=10] node[pos=0.2,left,inner sep=3pt] {$a,a'$} (2);
	\draw[->] (4) edge[out=225,in=315,looseness=10] node[pos=0.8,right,inner sep=3pt] {$a,a',b,c$} (4);
	\draw[->] (5) edge[out=135,in=45,looseness=10] node[above,inner sep=2pt] {$a,a',b,c$} (5);
	\draw[->] (6) edge[out=315,in=45,looseness=10] node[right,inner sep=3pt] {$a,a',b,c,d,e$} (6);
	\end{tikzpicture}
\end{center}
Above, there is a PFA with $7$ states and $8$ symbols, which takes 
$p(7) = 63$ steps to synchronize carefully. It remains D3-directing in exactly 
$63$ steps if the symbols $a$ and $a'$ are merged, and the symbols $f$ and $f'$ 
are merged. These merges yield an NFA with $7$ states and $6$ symbols.

For $n = 5$ and $n = 7$ states, the minimum number
of $n-1$ symbols is not possible if $\splt$ contains the identity symbol. 
For $n = 2,3,4,6$ states, $\splt$ may contain the identity symbol without affecting 
the minimum number of $n-1$ symbols.
Up to the identity symbol $a'$ in case of $4$ or $6$ states, the displayed PFAs are 
the same as those in \cite{arXiv:1801.10436}.

Since the symbols of an NFA (which is actually a DFA) can be merged in other ways than 
by way of a partitional symbolic merge (the partition may be replaced by any cover), 
the above techniques do not help if you want to count the number of (basic) NFAs with 
$n$ states which are D3-directing in the maximum number of steps. 
In \cite{arXiv:1703.07995}, the basic complete NFAs with $n$ states which are D3-directing 
in the maximum number of steps are counted for all $n \le 7$, and for $n \ge 8$ under 
the assumption that the Cerny automaton with $n$ states is the only automaton which
synchronizes in at least $(n-1)^2$ steps. 

I have made the same counts and can confirm the results of \cite{arXiv:1703.07995}. 
In the case of $2$ states, there are some subtleties with symmetry to take into account: 
there are $33$ basic complete NFAs $\A$ for which $\bsc(\splt(\A))$ is one of the $6$ 
basic DFAs which synchronize, $27$ basic complete NFAs $\A$ for which $\bsc(\splt(\A))$ 
is one of the $4$ symmetrically different basic DFAs which synchronize, and $20$ 
symmetrically different basic complete NFAs $\A$ for which $\bsc(\splt(\A))$ synchronizes. 
Synchronization always takes $1$ step.

\section{An estimate on both sides of the prime number constructions of Martyugin}
\label{martyuginprime}

My co-author Henk Don of \cite{arXiv:1801.10436} reinvented the prime number 
construction in \cite{primepfa} of ternary PFAs with superpolynomial 
synchronization, which is as follows.
\begin{center}
\begin{tikzpicture}
\node[circle,draw,inner sep=0pt,minimum width=3mm] (S) at (3,-3) {};
\draw[->] (S) edge[out=210,in=330,looseness=15,pos=0.5,inner sep=2.5pt] node[above] {$a,b,c$} (S);
\begin{scope}[shift={(0,0)}]
\node[lightgray,scale=9] at (0,0) {$\scriptstyle 2$};
\node[circle,draw,inner sep=0pt,minimum width=3mm] (0) at (-90:1) {};
\node[circle,draw,inner sep=0pt,minimum width=3mm] (1) at (90:1) {};
\foreach \n [remember=\n as \nlast (initially 1)] \n in {0,1} {
\draw[->,bend right=84,looseness=1.5] (\nlast) edge node[shift={(180*\n-180:2mm)}] {$a$} (\n);
}
\draw[->] (1) edge[out=45,in=135,looseness=10] node[below] {$b$} (1);
\draw[->] (0) edge node[pos=0.50,left,inner sep=2pt] {$b$} (1);
\draw[->] (0) edge node[pos=0.50,left,inner sep=6pt] {$c$} (S);
\end{scope}
\begin{scope}[shift={(3,0)}]
\node[lightgray,scale=9] at (0,0) {$\scriptstyle 3$};
\node[circle,draw,inner sep=0pt,minimum width=3mm] (0) at (-90:1) {};
\node[circle,draw,inner sep=0pt,minimum width=3mm] (1) at (30:1) {};
\node[circle,draw,inner sep=0pt,minimum width=3mm] (2) at (150:1) {};
\foreach \n [remember=\n as \nlast (initially 2)] \n in {0,1,2} {
\draw[->,bend right=54] (\nlast) edge node[shift={(120*\n-150:2mm)}] {$a$} (\n);
}
\draw[->] (1) edge[out=-15,in=75,looseness=10] node[left,pos=0.4] {$b$} (1);
\draw[->] (2) edge node[pos=0.50,below,inner sep=2pt] {$b$} (1);
\draw[->] (0) edge node[pos=0.55,left,inner sep=3pt] {$b$} (1);
\draw[->] (0) edge node[pos=0.50,left,inner sep=2pt] {$c$} (S);
\end{scope}
\begin{scope}[shift={(6,0)}]
\node[lightgray,scale=9] at (0,0) {$\scriptstyle 5$};
\node[circle,draw,inner sep=0pt,minimum width=3mm] (0) at (-90:1) {};
\node[circle,draw,inner sep=0pt,minimum width=3mm] (1) at (-18:1) {};
\node[circle,draw,inner sep=0pt,minimum width=3mm] (2) at (54:1) {};
\node[circle,draw,inner sep=0pt,minimum width=3mm] (3) at (126:1) {};
\node[circle,draw,inner sep=0pt,minimum width=3mm] (4) at (198:1) {};
\foreach \n [remember=\n as \nlast (initially 4)] \n in {0,1,2, 3, 4} {
\draw[->,bend right=30] (\nlast) edge node[shift={(72*\n-126:2mm)}] {$a$} (\n);
}
\draw[->] (1) edge[out=-63,in=27,looseness=10] node[left,pos=0.55] {$b$} (1);
\draw[->] (2) edge node[pos=0.50,left,inner sep=3pt] {$b$} (1);
\draw[->] (3) edge node[pos=0.65,left,inner sep=5pt] {$b$} (1);
\draw[->] (4) edge node[pos=0.50,above,inner sep=2pt] {$b$} (1);
\draw[->] (0) edge node[pos=0.35,above,inner sep=3pt] {$b$} (1);
\draw[->] (0) edge node[pos=0.50,left,inner sep=6pt] {$c$} (S);
\end{scope}
\node at (9,-1) {$\cdots$};
\end{tikzpicture}
\end{center}
The shortest synchronizing word is $b a^{p-1} c$, where $p$ is the product of
the prime numbers which are represented by a group of states.

In \cite{primepfa}, there is a prime number construction of binary PFAs with 
superpolynomial synchronization as well, which is slightly modified below.
The difference is that $a$ maps the node right from the lowest of a prime 
group to the sink state here, instead of the node left from it. This improves
the synchronization length with $2$. 
\begin{center}
\begin{tikzpicture}
\node[circle,draw,inner sep=0pt,minimum width=3mm] (S) at (3,-3) {};
\draw[->] (S) edge[out=220,in=320,looseness=12,pos=0.5,inner sep=2.5pt] node[above] {$a,b$} (S);
\begin{scope}[shift={(0,0)}]
\node[lightgray,scale=9] at (0,0) {$\scriptstyle 2$};
\node[circle,draw,inner sep=0pt,minimum width=3mm] (0) at (-90:1) {};
\node[circle,draw,fill=lightgray,inner sep=0pt,minimum width=3mm] (0p) at (0:1) {};
\node[circle,draw,inner sep=0pt,minimum width=3mm] (1) at (90:1) {};
\node[circle,draw,fill=lightgray,inner sep=0pt,minimum width=3mm] (1p) at (180:1) {};
\foreach \n [remember=\n as \nlast (initially 1)] \n in {0,1} {
\draw[->,bend right=39] (\nlast p) edge node[shift={(180*\n-135:2mm)}] {$b$} (\n);
\draw[->,bend right=39] (\n) edge node[shift={(180*\n-75:2mm)}] {$a$} (\n p);
}
\draw[->] (1) edge[out=45,in=135,looseness=10] node[below] {$b$} (1);
\draw[->] (0) edge node[pos=0.50,left,inner sep=2pt] {$b$} (1);
\draw[->] (0p) edge node[left,inner sep=3.5pt] {$a$} (S);
\end{scope}
\begin{scope}[shift={(3,0)}]
\node[lightgray,scale=9] at (0,0) {$\scriptstyle 3$};
\node[circle,draw,inner sep=0pt,minimum width=3mm] (0) at (-90:1) {};
\node[circle,fill=lightgray,draw,inner sep=0pt,minimum width=3mm] (0p) at (-30:1) {};
\node[circle,draw,inner sep=0pt,minimum width=3mm] (1) at (30:1) {};
\node[circle,fill=lightgray,draw,inner sep=0pt,minimum width=3mm] (1p) at (90:1) {};
\node[circle,draw,inner sep=0pt,minimum width=3mm] (2) at (150:1) {};
\node[circle,fill=lightgray,draw,inner sep=0pt,minimum width=3mm] (2p) at (210:1) {};
\foreach \n [remember=\n as \nlast (initially 2)] \n in {0,1,2} {
\draw[->,bend right=24] (\nlast p) edge node[shift={(120*\n-120:2mm)}] {$b$} (\n);
\draw[->,bend right=24] (\n) edge node[shift={(120*\n-60:2mm)}] {$a$} (\n p);
}
\draw[->] (1) edge[out=-15,in=75,looseness=10] node[left,pos=0.4] {$b$} (1);
\draw[->] (2) edge node[pos=0.50,below,inner sep=2pt] {$b$} (1);
\draw[->] (0) edge node[pos=0.55,left,inner sep=3pt] {$b$} (1);
\draw[->] (0p) edge[out=-90,in=75] node[pos=0.50,left,inner sep=3pt] {$a$} (S);
\end{scope}
\begin{scope}[shift={(6,0)}]
\node[lightgray,scale=9] at (0,0) {$\scriptstyle 5$};
\node[circle,draw,inner sep=0pt,minimum width=3mm] (0) at (-90:1) {};
\node[circle,fill=lightgray,draw,inner sep=0pt,minimum width=3mm] (0p) at (-54:1) {};
\node[circle,draw,inner sep=0pt,minimum width=3mm] (1) at (-18:1) {};
\node[circle,fill=lightgray,draw,inner sep=0pt,minimum width=3mm] (1p) at (18:1) {};
\node[circle,draw,inner sep=0pt,minimum width=3mm] (2) at (54:1) {};
\node[circle,fill=lightgray,draw,inner sep=0pt,minimum width=3mm] (2p) at (90:1) {};
\node[circle,draw,inner sep=0pt,minimum width=3mm] (3) at (126:1) {};
\node[circle,fill=lightgray,draw,inner sep=0pt,minimum width=3mm] (3p) at (162:1) {};
\node[circle,draw,inner sep=0pt,minimum width=3mm] (4) at (198:1) {};
\node[circle,fill=lightgray,draw,inner sep=0pt,minimum width=3mm] (4p) at (234:1) {};
\foreach \n [remember=\n as \nlast (initially 4)] \n in {0,1,2, 3, 4} {
\draw[->,bend right=12] (\nlast p) edge node[shift={(72*\n-108:2mm)}] {$b$} (\n);
\draw[->,bend right=12] (\n) edge node[shift={(72*\n-72:2mm)}] {$a$} (\n p);
}
\draw[->] (1) edge[out=-63,in=27,looseness=10] node[left,pos=0.55] {$b$} (1);
\draw[->] (2) edge node[pos=0.50,left,inner sep=3pt] {$b$} (1);
\draw[->] (3) edge node[pos=0.65,left,inner sep=5pt] {$b$} (1);
\draw[->] (4) edge node[pos=0.50,above,inner sep=2pt] {$b$} (1);
\draw[->] (0) edge node[pos=0.35,above,inner sep=3pt] {$b$} (1);
\draw[->] (0p) edge[out=-90,in=30] node[pos=0.50,above,inner sep=3pt] {$a$} (S);
\end{scope}
\node at (9,-1) {$\cdots$};
\end{tikzpicture}
\end{center}
The shortest synchronizing word is $b^2 (ab)^{p-1} a^2 = b (ba)^p a$, 
where $p$ is the product of the prime numbers which are represented by a group 
of states.

In \cite{d3nfa}, the author Martyugin found distinct lower bounds for the synchronization 
lenghs of the above ternary and binary construction, but a closer look reveals that the 
bounds are actually the same, and strictly weaker than $2^{\Omega(\sqrt{n})}$. We improve
this lower bound and also obtain an upper bound which is subexponential, by showing
that the number of required steps is 
$$
2 ^ {\Theta{\textstyle(}\sqrt{n \cdot \log n}{\textstyle)}} 
= n ^ {\Theta{\textstyle(}\sqrt{n / \log n}{\textstyle)}}
$$
We can follow the proof of Martyugin to some extend. Suppose that
the first $r$ primes are represented by a group in one of the above constructions.
Then the $i$-th prime is $\Theta(i \log i)$, which is $\Omicron(i \log r)$ and 
$\Omega\big(i^2 (\log r) / r\big)$ if $i \le r$, because $i \mapsto (\log i) / i$ 
is decreasing beyond Euler's number. Hence
$$
n = \sum_{i=1}^r \Omicron(i \log r) 
  = \sum_{i=1}^r \Omicron(i \log n) = \Omicron(r^2 \log n)
$$
So $r = \Omega(\sqrt{n/\log n})$ and $\log r = \Omega(\log n)$. Hence 
$$
n = \sum_{i=1}^r \Omega\big(i^2 (\log r) / r\big) 
  = \sum_{i=1}^r \Omega\big(i^2 (\log n) / r\big) = \Omega(r^2 \log n)
$$
So $r = \Omicron(\sqrt{n/\log n})$. Hence $r = \Theta(\sqrt{n/\log n})$. 
Furthermore,
$$
r! = \sqrt{\textstyle\prod_{i=1}^r \big(i\cdot(r+1-i)\big)} 
   \ge \sqrt{\textstyle\prod_{i=1}^r r} = r^{r/2}
$$
so
$$
p \ge r! = r^{\Omega(r)} = 2^{\Omega(r \log r)} 
  = 2^{\Omega{\textstyle(}\sqrt{n/\log n} \cdot \log n{\textstyle)}}
$$
and 
$$
p \le n^r = n^{\Omicron(r)} = 2^{\Omicron(r \log n)} 
  = 2^{\Omicron{\textstyle(}\sqrt{n/\log n} \cdot \log n{\textstyle)}}
$$
which yields the estimate.

\end{document}